\documentclass[11pt]{article}

\usepackage[a4paper,margin=1in]{geometry}
\usepackage{amsmath,amssymb,amsthm,amsfonts,mathtools}
\usepackage{mathrsfs}
\usepackage{enumitem}
\usepackage{hyperref}
\usepackage{bm}
\usepackage{stmaryrd}
\usepackage{tikz-cd}
\usepackage{dsfont}

\hypersetup{
  colorlinks=true,
  linkcolor=blue,
  citecolor=blue,
  urlcolor=blue
}

\newtheorem{theorem}{Theorem}[section]
\newtheorem{proposition}[theorem]{Proposition}

\newtheorem{corollary}[theorem]{Corollary}
\newtheorem{definition}[theorem]{Definition}
\newtheorem{remark}[theorem]{Remark}

\newcommand{\R}{\mathbb{R}}

\newcommand{\Rtimes}{\mathbb{R}^{\times}}
\newcommand{\dd}{\mathrm{d}}

\newcommand{\mc}[1]{\mathcal{#1}}

\newcommand{\Lbundle}{\mathcal{L}}

\newcommand{\epsc}{\varepsilon^{\mathrm c}}
\newcommand{\FL}{\mathbb{F}L}

\title{Contact Tulczyjew Geometry for Continuous and Discrete Dissipative Dynamics on Skew Algebroids}
\author{Leonardo Colombo\thanks{Centro de Automática y Robótica (CSIC-UPM), Carretera de Campo Real, km 0, 200, 28500 Arganda del Rey, Spain. (leonardo.colombo@csic.es).}\, and Manuel de Le\'on \thanks{Instituto de Ciencias Matemáticas (CSIC) and Real Academia de Ciencias, Madrid, Spain (mdeleon@icmat.es). \textbf{Funding:} The authors acknowledge financial support from Grant PID2022-137909-NB-C22 funded by the Spanish Ministry of Science and Innovation. \textbf{Author Contribution declaration:} All authors contributed equally to the conception, development, and writing of the manuscript. All authors read and approved the final version.}}
\date{\empty}

\begin{document}
\maketitle

\begin{abstract}
We develop a contact Tulczyjew formalism for dissipative dynamics on skew
algebroids. Starting from the Tulczyjew morphism of an skew algebroid, we
identify its contact extension in a local line-bundle trivialization. The local
representative is obtained by adding to the ordinary Tulczyjew morphism the
Euler vector field contribution on \(E^*\). This gives an intrinsic explanation
of the contact term appearing in the local contact Tulczyjew morphism.

For a contact generating object, the construction produces an implicit
dissipative dynamics on the contact phase side. In local coordinates, the
matching condition gives the Euler--Lagrange--Herglotz equations on the skew
algebroid. In the hyperregular case, the corresponding contact Hamiltonian
equations are recovered by Legendre transformation.

We also develop the discrete counterpart of the construction. After fixing a
discrete admissibility relation, a discrete contact generating object defines a
discrete contact Tulczyjew relation on the contact phase space. Discrete
Herglotz extremals are obtained by matching consecutive contact momenta, with
the usual conormal interpretation in the constrained case. In the regular
tangent-bundle case, this recovers standard contact variational integrators,
while in the singular or skew algebroid setting the same construction remains
meaningful as an implicit discrete relation rather than an a priori update map. 
\end{abstract}

\section{Introduction}

The classical Tulczyjew triple is one of the most effective geometric mechanisms
for passing from generating objects to dynamics \cite{Tulczyjew1976}. In the
symplectic picture, the right-hand side of the triple is Lagrangian, the left-hand
side is Hamiltonian, and the dynamics itself is represented by a generally implicit
first-order differential relation on phase space. One of the main strengths
of this viewpoint is that singular systems are included from the beginning and need
not be represented by an a priori vector field. In this paper $E\to M$ denotes
an skew algebroid, that is, a vector bundle equipped with an anchor and a
skew-symmetric bracket satisfying the Leibniz rule, but not necessarily the Jacobi
identity. For mechanics on such algebroids, the Tulczyjew point of view was
developed in \cite{GG2008}, where the algebroid structure is
encoded by a Tulczyjew morphism
$\varepsilon_E:T^*E\to TE^{*}$.

Contact versions of this picture have been developed from several complementary
viewpoints. In the classical tangent-bundle setting, Esen, Lainz,
de León and Marrero formulated a contact Tulczyjew triple in which contact
Hamiltonian and Herglotz dynamics are represented by Legendrian submanifolds,
using special contact structures and Morse families \cite{EsenLainzDeLeonMarrero2021}.
More recently, Grabowska and Grabowski developed an intrinsic contact Tulczyjew
formalism by replacing cotangent symplectic geometry with the contact geometry of
first jets of sections of line bundles \cite{GG2024}. In that framework, the basic
contact object is not a globally chosen contact form, but the corresponding
line-bundle or homogeneous \(\Rtimes\)-principal-bundle structure. Consequently,
contact Hamiltonians and contact Lagrangians should not be regarded primarily as
ordinary functions, but as local representatives of intrinsic line-bundle objects.
This line of thought is closely related to the modern geometric formulation of
contact Hamiltonian mechanics and Herglotz-type variational principles; see, for
instance, \cite{BravettiCruzTapias2017,deLeonLainz2019}.

The purpose of the present paper is to formulate the corresponding contact
Tulczyjew picture for skew algebroids. We do not view the construction as a
rewriting of the Herglotz equations in different coordinates. The geometric
problem is to understand how the Tulczyjew morphism
$\varepsilon_E:T^*E\to TE^{*}$ of an skew algebroid admits a contact extension such that dissipative dynamics
arises from contact generating objects in the same relation-based way that
ordinary algebroid dynamics arises in the symplectic Tulczyjew picture. Locally,
after choosing a trivialization of the contact line bundle, these generating
objects are represented by contact Lagrangians $L:E\times\mathbb R\to\mathbb R$.
This viewpoint is complementary to the Jacobi, prolongation, and direct
Herglotz variational approaches to dissipative mechanics on Lie algebroids. In
the Jacobi approach, \(E^*\times\mathbb R\) is endowed with a natural Jacobi
structure, and the corresponding contact Hamiltonian equations recover the
Euler-Lagrange-Herglotz equations under regularity assumptions
\cite{ASCLMS2024}. In the prolongation approach, based on the standard prolongation construction for Lie algebroids~\cite{Martinez2001}, one works with the contact Lagrangian and Hamiltonian prolongations $ \mathcal T^E(E\times \mathbb R)\rightarrow E\times \mathbb R$ and $\mathcal T^E(E^*\times \mathbb R)\rightarrow E^*\times \mathbb R$, and the dynamics is described by sections of these contact Lie algebroids \cite{SimoesColomboDeLeonSalgadoSouto2025Annali}. The direct Herglotz
variational approach derives the same local equations from stationarity of the
terminal value of the Herglotz action variable, and also gives
connection-based intrinsic reformulations useful for reduction and
Hamilton--Pontryagin principles
\cite{SimoesColombo2026}. Here, instead, the emphasis is on the
Tulczyjew relation itself: starting from the skew-algebroid Tulczyjew morphism
\(\varepsilon_E:T^*E\to TE^*\), we construct its contact line-bundle extension
and identify the resulting contact term as the Euler contribution on \(E^*\).
Thus the dynamics is generated first as a relation on the contact phase side,
rather than as a vector field or as a section of a prolongation algebroid.

The key observation is simple. In a local trivialization, the contact version of
the algebroid Tulczyjew morphism is obtained from the ordinary morphism \(\varepsilon_E\) by adding the Euler vector field $\Delta_{E^*}$ on \(E^*\). More precisely, if
\(\nu:T^*E\to E^*\) denotes the vertical dual projection and \(r\) is the local
representative of the vertical contact momentum, then the extra contribution is $r\,\Delta_{E^*}(\nu(\alpha_e))\in T_{\nu(\alpha_e)}E^*$, with \(\alpha_e\in T^*_eE\).
In coordinates this is exactly the term \(p_zp_\alpha\) in the momentum equation.
Thus the local contact Tulczyjew morphism is not an ad hoc modification of the
ordinary algebroid morphism, but the local representative of a line-bundle
construction.

Once the contact Tulczyjew morphism has been fixed, the rest of the construction
follows the usual Tulczyjew philosophy. A local Lagrangian
\(L:E\times \mathbb R\to\mathbb R\), representing an intrinsic contact generating
object in a chosen trivialization, determines a contact differential. Composing it
with the contact Tulczyjew morphism gives a generated phase relation. The dynamics
is obtained by requiring the tangent prolongation of the associated Legendre curve
to lie in this relation. In local coordinates, this matching condition gives the
Euler-Lagrange-Herglotz equations on the skew algebroid. The variational
interpretation is recovered by using the admissible variations determined by the
underlying skew algebroid structure.

The relation-based formulation also makes clear the distinction between regular,
hyperregular, and singular dissipative systems. If the fiber Hessian is
nondegenerate, the contact Legendre map is locally invertible and the implicit
relation determines a local vector field. If the contact Legendre map is globally
invertible, one obtains the corresponding contact Hamiltonian description by
Legendre transformation. If the Legendre map is singular, the same equations still
define a meaningful implicit differential relation, but not necessarily a vector
field. This is the setting in which one expects a contact constraint algorithm,
as in the precontact approach to singular Lagrangians
\cite{deLeonLainz2019}; but we do not develop that algorithm here. This relation-based viewpoint also suggests a future Hamilton-Jacobi theory
in terms of contact generating functions, parallel to the role played by
generating functions and symplectic relations in the classical Hamilton--Jacobi theory \cite{AbrahamMarsden1978, FerraroDeLeonMarreroMartinVaquero2017}.

The last part of the paper develops the corresponding discrete picture. The
starting point is the contact-groupoid principle that, before imposing
regularity, the natural discrete object is a Legendrian relation rather than a
map. Since a general skew algebroid need not admit an integrating groupoid, we
work locally after fixing a discrete admissibility relation
\(\mc A_d\subset E\times E\). This relation specifies the admissible consecutive
discrete states and is part of the discretization data. A discrete contact
Lagrangian \(L_d:\mc A_d\times\R\to\R\), together with the discrete Herglotz
update, then defines left and right discrete contact Legendre maps and hence a
discrete contact Tulczyjew relation on \(E^*\times\R\). The discrete equations
are obtained by concatenating this relation, equivalently by matching the
outgoing contact momentum of one step with the incoming contact momentum of the
next. When \(E=TQ\) and \(\mc A_d=Q\times Q\), the construction recovers the
standard contact variational integrators \cite{SimoesMartinDeDiegoLainzDeLeon2021}. Its advantage is that the same
relation remains meaningful for discrete admissibility relations on skew
algebroids, for constrained discrete systems, and for singular cases in which no
discrete phase-space map is available.

This should be distinguished from existing geometric integrators for contact and
Jacobi dynamics. Contact variational integrators start from the discrete Herglotz
principle on \(Q\times Q\times\R\) and, under regularity assumptions, produce
conformal contactomorphisms and exact discrete Lagrangians
\cite{VermeerenBravettiSeri2019,SimoesMartinDeDiegoLainzDeLeon2021}; the general
background is the variational-integrator theory of \cite{MarsdenWest2001}.
Poissonization-based Jacobi Hamiltonian integrators \cite{AraujoOliveiraMestre2026}, on the other hand, lift a
Jacobi Hamiltonian system to a homogeneous Poisson system and construct
Jacobi-preserving time-step maps using homogeneous symplectic bi-realizations and
homogeneous Lagrangian bisections. Our aim is different: we isolate the
Tulczyjew relation behind the discrete Herglotz equations and extend this
relation to the skew algebroid setting. Thus admissibility, Legendre
transformations, contact dissipation, and singularity are encoded in a single
geometric relation, rather than in an a priori discrete map.


The paper is organized as follows. Section~\ref{sec:almost} recalls the skew algebroid
structure, the Tulczyjew morphism, admissible curves and variations, and the local
contact model. Section~\ref{sec:dynamics} derives the implicit contact dynamics, its variational
interpretation, and the contact Legendre correspondence. Section~\ref{sec:intrinsic} explains the
line-bundle origin of the contact Tulczyjew morphism and shows that its local
representative is the morphism used in the preceding sections. Section~\ref{sec:discrete} develops the discrete theory. Starting from the contact-groupoid
motivation, it constructs the local discrete contact Tulczyjew relation
associated with a discrete admissibility relation \(\mc A_d\subset E\times E\).
It then derives the discrete Herglotz equations as momentum matching, relates
the regular tangent-bundle case to contact variational integrators, and explains
why the singular case remains meaningful as an implicit discrete relation. Section~\ref{conc} concludes the paper and outlines several directions for
future work.

\section{Skew algebroids and contact Tulczyjew structures}
\label{sec:almost}

We begin by recalling the minimal algebroid language used in the sequel. Our conventions follow the Tulczyjew approach to mechanics on general algebroids developed in \cite{GG2008}, in which the algebroid structure is encoded by a morphism of double vector bundles $\varepsilon_E: T^*E\rightarrow TE^{*}$ covering the identity on \(E^*\). In the skew-symmetric case this is
equivalently described by an anchor and a skew-symmetric bracket.

\begin{definition}
Let \(\tau:E\to M\) be a vector bundle. An \emph{skew algebroid structure}
on \(E\) consists of a vector bundle morphism $\rho:E\to TM$, called the anchor, and a skew-symmetric bilinear bracket $[\![\cdot,\cdot]\!]:\Gamma(E)\times\Gamma(E)\to\Gamma(E)$ satisfying the Leibniz identity $[\![X,fY]\!]=f[\![X,Y]\!]+(\rho(X)f)Y$ for all \(X,Y\in\Gamma(E)\) and \(f\in C^\infty(M)\). No Jacobi identity
is assumed.
\end{definition}

\begin{remark}
The bracket-and-anchor description will be used for local formulas. Geometrically,
following \cite{GG2008}, the structure is encoded by the corresponding Tulczyjew
morphism \(\varepsilon_E:T^*E\to TE^*\), or equivalently by the associated linear
almost Poisson tensor on \(E^*\).
\end{remark}

\begin{remark}
Our terminology follows the usage of skew algebroids in the sense of
Grabowski \cite{Grabowski2012Modular}. In other parts of the literature the same objects are
also called almost Lie algebroids \cite{GrabowskiJozwikowski2011}, depending on whether
one emphasizes the skew bracket, the failure of the Jacobi identity, or the
associated linear almost Poisson tensor on \(E^*\).
\end{remark}

Let \((x^i,y^\alpha)\) be local vector bundle coordinates on \(E\), and let
\((e_\alpha)\) be the corresponding local basis of sections. The anchor and
the bracket are written as
\[
        \rho(e_\alpha)
        =
        \rho^i_{\alpha}(x)\frac{\partial}{\partial x^i},
        \qquad
        [\![e_\alpha,e_\beta]\!]
        =
        C^\gamma_{\alpha\beta}(x)e_\gamma,
        \qquad
        C^\gamma_{\alpha\beta}=-C^\gamma_{\beta\alpha}.
\]

For \(y=y^\alpha e_\alpha\in E_x\) and
\(\xi=\xi_\alpha e^\alpha\in E_x^*\), we shall use the shorthand
\(C(x)(y,\xi)\in E_x^*\) defined by
$\bigl(C(x)(y,\xi)\bigr)_\alpha
        =
        C^\gamma_{\alpha\beta}(x)y^\beta\xi_\gamma$. With this convention the local expression of the Tulczyjew morphism is
\begin{equation}
\varepsilon_E(x,y,p,\xi)
=
(x,\xi,\rho(x)y,\rho(x)p-C(x)(y,\xi)).
\label{eq:epsE}
\end{equation}
Equivalently,
\[
\dot x^i=\rho^i_\alpha(x)y^\alpha,
\qquad
\dot \xi_\alpha=\rho^i_\alpha(x)p_i-C^\gamma_{\alpha\beta}(x)y^\beta\xi_\gamma.
\]

\begin{definition}
An \emph{admissible curve} \(a(t)=(x^i(t),y^\alpha(t))\) in \(E\) is a
curve satisfying
\begin{equation}
        \dot x^i
        =
        \rho^i_{\alpha}(x)y^\alpha .
\label{eq:admissible}
\end{equation}
\end{definition}

Equivalently, the tangent lift \(\dot a(t)\in T_{a(t)}E\) satisfies $T\tau(\dot a(t))=\rho(a(t))$. Thus admissible curves are those whose tangent lifts take values in the
admissibility subset
\[
T^{\mathrm{adm}}E
:=\{v_e\in TE:\;T\tau(v_e)=\rho(e)\}.
\]
\begin{remark}
In the terminology of \cite{GG2008}, admissible curves
may be characterized intrinsically as those whose tangent prolongations take values in the holonomic subset \(T^{\mathrm{hol}}E\subset TE\). We shall
only need the local expression \eqref{eq:admissible}.
\end{remark}

\begin{definition}\label{var}
Let \(a(t)\) be an admissible curve with base curve \(x(t)\). A
\emph{variation generator} along \(a(t)\) is a time-dependent section $\eta(t)=\eta^\alpha(t)e_\alpha$ along \(x(t)\). The associated infinitesimal admissible variation is given
locally by
\begin{equation}
        \delta x^i
        =
        \rho^i_{\alpha}(x)\eta^\alpha,
        \qquad
        \delta y^\gamma
        =
        \dot\eta^\gamma
        +
        C^\gamma_{\alpha\beta}(x)y^\alpha\eta^\beta .
\label{eq:admvar}
\end{equation}
\end{definition}

\begin{remark}
Formula \eqref{eq:admvar} is the standard local expression of admissible variations
on a general algebroid, following \cite{GG2008}.
It can also be described intrinsically through the relation $\kappa_E$
associated with the Tulczyjew morphism $\varepsilon_E$, but only the
local form \eqref{eq:admvar} will be used below. Thus the admissible
variations are not introduced as an additional choice; they are determined by the algebroid structure itself. For variational problems with fixed
endpoints we shall impose $\eta(t_0)=\eta(t_1)=0$.
\end{remark}

\subsection{Contact Tulczyjew structures and local trivializations}
\label{sec:contact-structure-local}

The classical Tulczyjew picture is formulated in terms of canonical symplectic
structures on iterated tangent and cotangent bundles \cite{Tulczyjew1976}. In the
contact case, the appropriate geometric language is not that of a chosen global
contact form, but that of line bundles and homogeneous
\(\mathbb R^\times\)-principal bundles. Following the contact Tulczyjew
construction of \cite{GG2024}, the contact model is expressed in terms of first
jets of sections of a line bundle, rather than through a globally trivial
contactization \(T^*M\times\mathbb R\).

Let \(E \to M\) be an skew algebroid. Following the line-bundle approach to
contact geometry \cite{GG2024}, the contact structure on the phase side is encoded
by a line bundle $\mathcal L \rightarrow E^*$ or, equivalently, by the associated principal \(\mathbb R^\times\)-bundle $P=\mathcal L^\times\rightarrow E^*$.
After choosing a local trivialization of \(\mathcal L\), the line-bundle coordinate
is represented by an ordinary scalar variable \(z\). Thus the local contact phase
space is represented by \(E^*\times\mathbb R\), with coordinates
\((x^i,p_\alpha,z)\). The coordinate \(z\) is not intrinsic; it depends on the
chosen trivialization.

Similarly, a contact generating object is intrinsically a line-bundle
object. In a local trivialization it is represented by an ordinary smooth
function
\[
        L:E\times\R\rightarrow \R,
        \qquad
        (x,y,z)\mapsto L(x,y,z).
\]

In the same local trivialization, the source of the contact Tulczyjew morphism is represented by coordinates $(x^i,y^\alpha,z,p_i,p_\alpha,p_z,u)$, where \((p_i,p_\alpha)\) are the cotangent variables dual to
\((x^i,y^\alpha)\), \(p_z\) is the contact momentum, and \(u\) is the
contact value variable. The target is represented by coordinates
$(x^i,p_\alpha,z,\dot x^i,\dot p_\alpha,\dot z)$, namely by contactified kinematic variables over \(E^*\times\R\).

The local contact Tulczyjew morphism is
\begin{equation}
\epsc(x,y,z,p_x,p_y,p_z,u)
=
\bigl(
 x,p_y,z,
 \rho(x)y,
 \rho(x)p_x-C(x)(y,p_y)+p_zp_y,
 u
\bigr).
\label{eq:epscmodel}
\end{equation}
Equivalently,
\[
        \dot x^i
        =
        \rho^i_{\alpha}(x)y^\alpha,
        \qquad
        \dot p_\alpha
        =
        \rho^i_{\alpha}(x)p_i
        -
        C^\gamma_{\alpha\beta}(x)y^\beta p_\gamma
        +
        p_zp_\alpha,
        \qquad
        \dot z=u .
\]
\begin{remark}
The term \(p_zp_\alpha\) is the local contact contribution. In
Section~\ref{sec:intrinsic} we show that, in the trivial
contactization, \eqref{eq:epscmodel} is obtained from the ordinary
algebroid Tulczyjew morphism \(\varepsilon_E:T^*E\to TE^*\) by adding the
Euler vector field contribution on \(E^*\).
\end{remark}

The local contact differential of \(L\) is represented by
\[
J^{1,c}L(x,y,z)
=
\left(
 x,y,z,
 \frac{\partial L}{\partial x^i},
 \frac{\partial L}{\partial y^\alpha},
 \frac{\partial L}{\partial z},
 L
\right).
\]
The local contact Legendre map is
\begin{equation}
    \label{CLM}
        \lambda_L(x,y,z)
        =
        \left(
        x,\frac{\partial L}{\partial y},z
        \right).\end{equation}

Applying \(\epsc\) to \(J^{1,c}L\), we obtain the generated phase relation
\[
\Lambda_L(x,y,z)
=
\left(
 x,\frac{\partial L}{\partial y},z,
 \rho(x)y,
 \rho(x)\frac{\partial L}{\partial x}
 -
 C(x)\left(y,\frac{\partial L}{\partial y}\right)
 +
 \frac{\partial L}{\partial z}\frac{\partial L}{\partial y},
 L
\right).
\]

Let $\gamma(t)=(x(t),y(t),z(t))$ be a curve in \(E\times\R\). The local contact Legendre map sends
\(\gamma\) to the phase curve
\[
        \lambda_L\circ\gamma
        =
        \left(
        x(t),
        \frac{\partial L}{\partial y}(x(t),y(t),z(t)),
        z(t)
        \right)
        \in E^*\times\R .
\]
We shall call $\lambda_L\circ\gamma$ the local Legendre curve associated with \(\gamma\).
We denote by $\mathbf t(\lambda_L\circ\gamma)$ its tangent
prolongation, written in local coordinates as
\begin{equation}\label{tangprolong}
 \mathbf t(\lambda_L\circ\gamma)
        =
        \left(
        x,
        \frac{\partial L}{\partial y},
        z,
        \dot x,
        \frac{d}{dt}\frac{\partial L}{\partial y},
        \dot z
        \right).
\end{equation}
The contact Tulczyjew dynamics is defined by the matching condition
\begin{equation}
        \mathbf t(\lambda_L\circ\gamma)
        =
        \Lambda_L\circ\gamma .
\label{eq:kinematicmatch}
\end{equation}
Equating the components in \eqref{eq:kinematicmatch} gives the local
implicit equations derived in the next section.

\begin{remark}
All formulas in this subsection are local representatives of the
\(\mathcal L\)-line-bundle picture. The nontrivial-line-bundle construction
requires the full homogeneous \(\mathbb R^\times\)-principal-bundle formulation. In
Section~\ref{sec:intrinsic} we prove the canonical origin of the above local morphism in the
trivial contactization.
\end{remark}

\section{Implicit dissipative dynamics}
\label{sec:dynamics}

We now derive the local dynamics determined by the contact Tulczyjew
matching condition \eqref{eq:kinematicmatch}.
The guiding point is the same as in the classical Tulczyjew formalism:
the dynamics is first a geometric relation on the phase space, and only
under additional regularity assumptions does it become the graph of a
vector field. Thus singular Lagrangians are not excluded from the
construction.

Let $L:E\times\R\longrightarrow\R$ be the local representative of a contact generating object in a chosen
trivialization of the line bundle. 

\begin{definition}
A curve $\gamma(t)=(x(t),y(t),z(t))\in E\times\R$ is called a \emph{contact Tulczyjew trajectory} for \(L\) if its
Legendre image \(\lambda_L\circ\gamma\) satisfies the matching condition $\mathbf t(\lambda_L\circ\gamma)=\Lambda_L\circ\gamma$.
\end{definition}

\begin{theorem}
\label{thm:implicitdynamics} $\gamma(t)=(x^i(t),y^\alpha(t),z(t))$ is a contact Tulczyjew
trajectory for \(L\) if and only if it satisfies
\begin{align}
\dot x^i
&=
\rho^i_{\alpha}(x)y^\alpha,
\label{eq:ELH1}
\\
\frac{\dd}{\dd t}
\left(
\frac{\partial L}{\partial y^\alpha}
\right)
&=
\rho^i_{\alpha}(x)\frac{\partial L}{\partial x^i}
-
C^\gamma_{\alpha\beta}(x)y^\beta
\frac{\partial L}{\partial y^\gamma}
+
\frac{\partial L}{\partial z}
\frac{\partial L}{\partial y^\alpha},
\label{eq:ELH2}
\\
\dot z
&=
L(x,y,z).
\label{eq:ELH3}
\end{align}
Equivalently,
\begin{equation}
\frac{\dd}{\dd t}
\left(
\frac{\partial L}{\partial y^\alpha}
\right)
+
C^\gamma_{\alpha\beta}(x)y^\beta
\frac{\partial L}{\partial y^\gamma}
-
\rho^i_{\alpha}(x)\frac{\partial L}{\partial x^i}
-
\frac{\partial L}{\partial z}
\frac{\partial L}{\partial y^\alpha}
=
0.
\label{eq:ELHfinal}
\end{equation}
\end{theorem}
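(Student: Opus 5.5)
The proof is a direct component-by-component comparison of the two sides of the matching condition \eqref{eq:kinematicmatch}, both of which are already written in the same local coordinates $(x^i,p_\alpha,z,\dot x^i,\dot p_\alpha,\dot z)$ on the contactified kinematic space over $E^*\times\R$. I will write out $\mathbf t(\lambda_L\circ\gamma)$ using \eqref{tangprolong}. I will write out $\Lambda_L\circ\gamma$ by substituting $J^{1,c}L$ into the local contact Tulczyjew morphism \eqref{eq:epscmodel}. Then I will observe that two elements of this space coincide if and only if all six groups of coordinates coincide.

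First, the base components. The $x$, $p_\alpha$ and $z$ slots of both sides are $x(t)$, $\partial L/\partial y(x(t),y(t),z(t))$ and $z(t)$. These agree identically for any curve $\gamma$, so they impose no condition. This is the key point making the statement an "if and only if": the base points of the two objects coincide automatically because $\Lambda_L$ is $\epsc\circ J^{1,c}L$ and $\epsc$ covers $\lambda_L$ on the base.

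Second, the fibre components, taken one at a time. The $\dot x$ slot gives $\dot x^i=\rho^i_\alpha(x)y^\alpha$, which is \eqref{eq:ELH1}, i.e.\ admissibility \eqref{eq:admissible}. For the $\dot p_\alpha$ slot, I unpack the shorthand $C(x)(y,\xi)_\alpha=C^\gamma_{\alpha\beta}y^\beta\xi_\gamma$ with $\xi=\partial L/\partial y$ and $\rho(x)p_x$ as $\rho^i_\alpha\,\partial L/\partial x^i$. This yields exactly \eqref{eq:ELH2}, where the left-hand side is the time derivative of $\partial L/\partial y^\alpha$ along $\gamma$ and the extra term comes from $p_zp_\alpha$ with $p_z=\partial L/\partial z$. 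The $\dot z$ slot matches $\dot z$ with the last entry $u=L$, giving \eqref{eq:ELH3}. Conversely, if \eqref{eq:ELH1}--\eqref{eq:ELH3} hold, all six slots agree, so $\gamma$ is a contact Tulczyjew trajectory.

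Finally, \eqref{eq:ELHfinal} follows by moving all terms of \eqref{eq:ELH2} to one side. There is no real obstacle. The only care needed is index bookkeeping in the bracket term, namely checking that the convention for $C(x)(y,\xi)$ produces $C^\gamma_{\alpha\beta}y^\beta\,\partial L/\partial y^\gamma$ with the sign shown. It is also worth noting that no regularity of $L$ is used, since the equations are read off as a relation and never solved for $\dot y$.
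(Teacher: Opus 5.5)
Your proposal is correct and matches the paper's proof: both write out $\mathbf t(\lambda_L\circ\gamma)$ and $\Lambda_L\circ\gamma$ in the same local coordinates and equate them component by component. The base slots agree automatically, the three fibre slots give \eqref{eq:ELH1}--\eqref{eq:ELH3}, and \eqref{eq:ELHfinal} is \eqref{eq:ELH2} rearranged.
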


\begin{proof}
Let \(\gamma(t)=(x(t),y(t),z(t))\). The Legendre image of \(\gamma\) is
\[
        \lambda_L\circ\gamma
        =
        \left(
        x(t),
        \frac{\partial L}{\partial y}(x(t),y(t),z(t)),
        z(t)
        \right).
\]
Hence its tangent prolongation is locally
\[
        \mathbf t(\lambda_L\circ\gamma)
        =
        \left(
        x,
        \frac{\partial L}{\partial y},
        z,
        \dot x,
        \frac{\dd}{\dd t}\frac{\partial L}{\partial y},
        \dot z
        \right).
\]
On the other hand,
\[
\Lambda_L\circ\gamma
=
\left(
 x,\frac{\partial L}{\partial y},z,
 \rho(x)y,
 \rho(x)\frac{\partial L}{\partial x}
 -
 C(x)\left(y,\frac{\partial L}{\partial y}\right)
 +
 \frac{\partial L}{\partial z}
 \frac{\partial L}{\partial y},
 L
\right)
\]
along the curve. Therefore the matching condition
$\mathbf t(\lambda_L\circ\gamma)
        =
        \Lambda_L\circ\gamma$ is equivalent, component by component, to
\[
        \dot x=\rho(x)y,
\]
\[
        \frac{\dd}{\dd t}
        \frac{\partial L}{\partial y}
        =
        \rho(x)\frac{\partial L}{\partial x}
        -
        C(x)\left(y,\frac{\partial L}{\partial y}\right)
        +
        \frac{\partial L}{\partial z}
        \frac{\partial L}{\partial y},
\]
and
\[
        \dot z=L(x,y,z).
\]
In coordinates these are precisely \eqref{eq:ELH1}-\eqref{eq:ELH3}.
Equivalently, writing the momentum equation with all terms on the left-hand side
gives \eqref{eq:ELHfinal}. Conversely, if these three component equations hold, then the tangent
prolongation of the Legendre curve and the generated phase relation have
the same coordinates along \(\gamma\). Hence
\(t(\lambda_L\circ\gamma)=\Lambda_L\circ\gamma\).
\end{proof}

\begin{definition}
A local representative \(L\) is called \emph{regular} if the fiber Hessian $\left(\frac{\partial^2L}{\partial y^\alpha\partial y^\beta}\right)$ is nondegenerate, and \emph{singular} otherwise.
\end{definition}

Regularity implies local invertibility of the contact Legendre map $\lambda_L:E\times\mathbb R\rightarrow E^*\times\mathbb R$. Indeed, since \(\lambda_L\) is the identity on the variables \((x,z)\), its local
invertibility is equivalent to the nondegeneracy of the fiber Hessian with respect
to \(y\).

If \(L\) is regular, equation (7) can be solved locally for \(\dot y\), because
\(\dot x\) and \(\dot z\) are already determined by (6) and (8). Hence the
implicit relation defined by (6)-(8) determines locally a vector field on
\(E\times\mathbb R\). If \(L\) is singular, equations (6)-(8) define, in general, an implicit
differential relation on \(E\times\mathbb R\), rather than the graph of a vector
field. Thus singularity is not an anomaly of the theory, but part of its natural
domain of applicability.

\subsection{Variational interpretation and Euler-Lagrange-Herglotz equations}
\label{sec:lagrangian}

We now relate the contact-Tulczyjew dynamics to the Herglotz variational
principle on the skew algebroid $E$. The admissibility condition and the
admissible variations are those introduced in Definition~2.3 and
Definition~2.5.

\begin{definition}
Let \(L:E\times\mathbb R\to\mathbb R\) be the local representative of a
contact generating object. A curve $\gamma(t)=(x(t),y(t),z(t))$
is called a Herglotz extremal if \((x(t),y(t))\) is admissible,
$\dot z=L(x,y,z)$, and the terminal value \(z(T)\) is stationary under all admissible
variations with fixed endpoints.
\end{definition}

\begin{theorem}
A curve \(\gamma(t)=(x(t),y(t),z(t))\) is a Herglotz extremal, with respect to
admissible variations with fixed endpoints and fixed initial value of \(z\), if
and only if it is a contact Tulczyjew trajectory.
\end{theorem}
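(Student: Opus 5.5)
The plan is to reduce the statement to Theorem~\ref{thm:implicitdynamics}. Both notions already contain the admissibility condition \eqref{eq:admissible} and the Herglotz equation $\dot z=L(x,y,z)$. It therefore suffices to show the following: for admissible curves satisfying $\dot z=L$, stationarity of $z(T)$ under all admissible variations \eqref{eq:admvar} with $\eta(0)=\eta(T)=0$ and $\delta z(0)=0$ is equivalent to the momentum equation \eqref{eq:ELH2}.

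First I linearize the Herglotz equation along a variation. Write $\zeta(t)=\delta z(t)$. Differentiating $\dot z=L(x,y,z)$ gives the linear ODE
\[
\dot\zeta=\frac{\partial L}{\partial x^i}\delta x^i+\frac{\partial L}{\partial y^\gamma}\delta y^\gamma+\frac{\partial L}{\partial z}\zeta,\qquad \zeta(0)=0.
\]
I introduce the integrating factor $\sigma(t)=\exp\bigl(-\int_0^t\frac{\partial L}{\partial z}\,ds\bigr)>0$, so that $\frac{d}{dt}(\sigma\zeta)=\sigma\bigl(\frac{\partial L}{\partial x^i}\delta x^i+\frac{\partial L}{\partial y^\gamma}\delta y^\gamma\bigr)$. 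Integrating from $0$ to $T$ and using $\zeta(0)=0$ gives $\sigma(T)\,\delta z(T)=\int_0^T\sigma\bigl(\frac{\partial L}{\partial x^i}\delta x^i+\frac{\partial L}{\partial y^\gamma}\delta y^\gamma\bigr)dt$.

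Next I substitute \eqref{eq:admvar}, namely $\delta x^i=\rho^i_\alpha\eta^\alpha$ and $\delta y^\gamma=\dot\eta^\gamma+C^\gamma_{\alpha\beta}y^\alpha\eta^\beta$. I integrate the $\dot\eta$ term by parts; the boundary terms vanish because $\eta(0)=\eta(T)=0$. This yields
\[
\sigma(T)\,\delta z(T)=\int_0^T\Bigl(\sigma\rho^i_\alpha\frac{\partial L}{\partial x^i}-\frac{d}{dt}\Bigl(\sigma\frac{\partial L}{\partial y^\alpha}\Bigr)+\sigma C^\gamma_{\beta\alpha}y^\beta\frac{\partial L}{\partial y^\gamma}\Bigr)\eta^\alpha\,dt.
\]
To put this in the form of \eqref{eq:ELH2}, I relabel indices and use the skew-symmetry $C^\gamma_{\beta\alpha}=-C^\gamma_{\alpha\beta}$ to turn the bracket term into $-\sigma C^\gamma_{\alpha\beta}y^\beta\frac{\partial L}{\partial y^\gamma}$. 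I also expand $\frac{d}{dt}(\sigma\,\partial L/\partial y^\alpha)=\sigma\bigl(\frac{d}{dt}\frac{\partial L}{\partial y^\alpha}-\frac{\partial L}{\partial z}\frac{\partial L}{\partial y^\alpha}\bigr)$, which is exactly where the contact term $\frac{\partial L}{\partial z}\frac{\partial L}{\partial y^\alpha}$ appears. The integrand then becomes $-\sigma$ times the left-hand side of \eqref{eq:ELHfinal}, contracted with $\eta^\alpha$.

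Finally I conclude both directions. If \eqref{eq:ELHfinal} holds, then $\delta z(T)=0$ for every admissible variation. Conversely, the generators $\eta$ range over arbitrary smooth compactly supported sections along $x(t)$, and $\sigma>0$. The fundamental lemma of the calculus of variations, applied componentwise in the local frame $e_\alpha$, then forces \eqref{eq:ELHfinal}, and Theorem~\ref{thm:implicitdynamics} closes the equivalence. I expect the main obstacle to be bookkeeping rather than a conceptual difficulty: the index and sign conventions in the $C$-term of \eqref{eq:admvar} must match the sign in \eqref{eq:epsE}, and this is exactly where the skew-symmetry assumption enters. A secondary point is justifying that $\delta z$ obeys the linearized ODE. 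For this I take a smooth one-parameter family of admissible curves realizing the generator $\eta$, solve the Herglotz ODE with fixed initial value for each member, and differentiate in the parameter using smooth dependence on parameters. If the curve leaves a single chart, I reduce to the local case by using variations supported in one chart.
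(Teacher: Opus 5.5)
Your proposal is correct and is essentially the paper's proof: the same integrating factor for the linearized Herglotz equation, substitution of the admissible variations \eqref{eq:admvar}, and integration by parts with $\eta(0)=\eta(T)=0$. As in the paper, the contact term comes from $\dot\sigma=-\sigma\,\partial L/\partial z$, the sign of the $C$-term is fixed by skew-symmetry, and the fundamental lemma finishes the argument.

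One correction concerns your side justification of the linearized equation for $\delta z$. For a general skew algebroid the variations \eqref{eq:admvar} need not be tangent to any family of admissible curves. They satisfy the linearized admissibility condition only when $[\rho(e_\alpha),\rho(e_\beta)]=\rho([\![e_\alpha,e_\beta]\!])$, and this is not part of the definition of a skew algebroid. So use an arbitrary smooth family of curves with first-order variation $(\delta x,\delta y)$, which is all the linearization needs. Alternatively, take the linearized equation as the definition of the induced variation of $z$, as the paper implicitly does.
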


\begin{proof}
Let \(\gamma(t)=(x(t),y(t),z(t))\) be a Herglotz extremal. Since
\(\dot z=L(x,y,z)\), the induced variation of \(z\) satisfies
\[
\frac{d}{dt}(\delta z)
=
\frac{\partial L}{\partial x^i}\delta x^i
+
\frac{\partial L}{\partial y^\alpha}\delta y^\alpha
+
\frac{\partial L}{\partial z}\delta z .
\]
Let
\[
\lambda(t)=
\exp\left(
-\int_0^t
\frac{\partial L}{\partial z}(\tau)\,d\tau
\right).
\]
Since the initial value of \(z\) is fixed, \(\delta z(0)=0\). Multiplying by
\(\lambda(t)\) gives
\[
\frac{d}{dt}\bigl(\lambda(t)\delta z(t)\bigr)
=
\lambda(t)
\left(
\frac{\partial L}{\partial x^i}\delta x^i
+
\frac{\partial L}{\partial y^\alpha}\delta y^\alpha
\right).
\]
Therefore, because \(\lambda(T)\neq 0\), the stationarity condition
\(\delta z(T)=0\) is equivalent to
\[
\int_0^T
\lambda(t)
\left(
\frac{\partial L}{\partial x^i}\delta x^i
+
\frac{\partial L}{\partial y^\alpha}\delta y^\alpha
\right)\,dt=0.
\]

Using the admissible variations of Definition \ref{var}, and integrating by parts, with \(\eta(0)=\eta(T)=0\), we obtain
\[
\int_0^T
\lambda(t)
\left[
\rho^i_\alpha\frac{\partial L}{\partial x^i}
-
C^\gamma_{\alpha\beta}y^\beta
\frac{\partial L}{\partial y^\gamma}
-
\frac{d}{dt}
\left(
\frac{\partial L}{\partial y^\alpha}
\right)
+
\frac{\partial L}{\partial z}
\frac{\partial L}{\partial y^\alpha}
\right]
\eta^\alpha\,dt=0 .
\]
Since the generators \(\eta^\alpha\) are arbitrary, the bracketed term vanishes.
Equivalently,
\[
\frac{d}{dt}
\left(
\frac{\partial L}{\partial y^\alpha}
\right)
+
C^\gamma_{\alpha\beta}y^\beta
\frac{\partial L}{\partial y^\gamma}
-
\rho^i_\alpha
\frac{\partial L}{\partial x^i}
-
\frac{\partial L}{\partial z}
\frac{\partial L}{\partial y^\alpha}
=0 .
\]
Together with the admissibility equation and the Herglotz equation for \(z\), this
is precisely the system characterized in Theorem~3.2. Hence \(\gamma\) is a
contact Tulczyjew trajectory.

Conversely, assume that \(\gamma\) is a contact Tulczyjew trajectory. Then, by
Theorem~3.2, it satisfies the admissibility equation, the Herglotz equation, and
the Euler--Lagrange--Herglotz equations above. Substituting these equations in the
preceding integration-by-parts identity gives
\[
\int_0^T
\lambda(t)
\left(
\frac{\partial L}{\partial x^i}\delta x^i
+
\frac{\partial L}{\partial y^\alpha}\delta y^\alpha
\right)\,dt=0
\]
for every admissible variation with fixed endpoints. Since
\[
\lambda(T)\delta z(T)
=
\int_0^T
\lambda(t)
\left(
\frac{\partial L}{\partial x^i}\delta x^i
+
\frac{\partial L}{\partial y^\alpha}\delta y^\alpha
\right)\,dt
\]
and \(\lambda(T)\neq 0\), it follows that \(\delta z(T)=0\). Thus \(\gamma\) is a
Herglotz extremal.
\end{proof}

\begin{remark}
Thus the local contact-Tulczyjew equations are not merely the coordinate
expression of the matching condition
$t(\lambda_L\circ\gamma)=\Lambda_L\circ\gamma$. They also encode the Herglotz variational principle on the skew algebroid.
In this sense, the Euler-Lagrange-Herglotz equations are the variational
interpretation of the contact Tulczyjew trajectory condition.
\end{remark}

\begin{definition}
The local representative $L$ is called hyperregular if the contact Legendre map
$\lambda_L:E\times\R\to E^*\times\R$ defined in \eqref{CLM} is a global diffeomorphism.
\end{definition}

\begin{proposition}
\label{prop:hyperregular}
Assume that \(L\) is hyperregular. Then there is a unique smooth map $Y:E^*\times\R\to E$, with \(Y(x,p,z)\in E_x\), determined by
$\lambda_L(x,Y(x,p,z),z)=(x,p,z)$, and hence a unique Hamiltonian \(H:E^*\times\R\to\R\) defined by
\[
H(x,p,z)=\langle p,Y(x,p,z)\rangle-L(x,Y(x,p,z),z).
\]
Under the contact Legendre map, the contact-Tulczyjew dynamics generated
by \(L\) is equivalent to the contact-Hamiltonian dynamics generated by
\(H\).
\end{proposition}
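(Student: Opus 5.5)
The plan is to compute the partial derivatives of \(H\) by the usual envelope argument, and then transport the equations of Theorem~\ref{thm:implicitdynamics} through \(\lambda_L\). First, hyperregularity makes \(\lambda_L\) a global diffeomorphism that is the identity on \((x,z)\). So its inverse has the form \((x,p,z)\mapsto(x,Y(x,p,z),z)\). This gives existence, uniqueness and smoothness of \(Y\), and hence of \(H\), and the relation \(\frac{\partial L}{\partial y}(x,Y,z)=p\) holds identically.

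Differentiating \(H=\langle p,Y\rangle-L(x,Y,z)\), the terms containing derivatives of \(Y\) cancel because of this identity. This yields
\[
\frac{\partial H}{\partial p_\alpha}=Y^\alpha,\qquad
\frac{\partial H}{\partial x^i}=-\frac{\partial L}{\partial x^i},\qquad
\frac{\partial H}{\partial z}=-\frac{\partial L}{\partial z},
\]
all evaluated at \((x,Y,z)\). We also get \(L(x,Y,z)=p_\alpha\frac{\partial H}{\partial p_\alpha}-H\).

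Next I will fix the contact Hamiltonian equations on \(E^*\times\R\) in the sign convention compatible with \(\epsc\). These are the equations obtained by applying \(\epsc\) to the contact differential of \(-H\) in the Hamiltonian picture, equivalently the Jacobi structure of \cite{ASCLMS2024}:
\begin{align*}
\dot x^i&=\rho^i_\alpha\frac{\partial H}{\partial p_\alpha},\qquad
\dot z=p_\alpha\frac{\partial H}{\partial p_\alpha}-H,\\
\dot p_\alpha&=-\rho^i_\alpha\frac{\partial H}{\partial x^i}-C^\gamma_{\alpha\beta}\frac{\partial H}{\partial p_\beta}p_\gamma-\frac{\partial H}{\partial z}p_\alpha .
\end{align*}

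Then I will check equivalence. Take a contact Tulczyjew trajectory \(\gamma\), set \((x,p,z)=\lambda_L\circ\gamma\), and note that \(y=Y(x,p,z)\) along it. Substituting the derivative identities into \eqref{eq:ELH1}--\eqref{eq:ELH3} term by term produces exactly the Hamiltonian system, with the \(p_zp_\alpha\) term \(\frac{\partial L}{\partial z}\frac{\partial L}{\partial y^\alpha}\) becoming \(-\frac{\partial H}{\partial z}p_\alpha\). Conversely, given a solution \((x,p,z)\) of the Hamiltonian system, define \(\gamma=(x,Y(x,p,z),z)\). Then \(\lambda_L\circ\gamma=(x,p,z)\), and reading the same substitutions backwards recovers \eqref{eq:ELH1}--\eqref{eq:ELH3}. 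Since \(\lambda_L\) is a bijection, these two constructions are mutually inverse.

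The main obstacle is conventional rather than computational. The paper has not yet stated which contact Hamiltonian equations on \(E^*\times\R\) are meant, so the proof must fix them consistently with \(\epsc\), in particular the sign of the bracket term and of the \(z\)-dissipation term. I would justify the choice by noting that it coincides with the Jacobi-structure equations for \(E^*\times\R\) in \cite{ASCLMS2024}. The remaining verification is routine chain-rule bookkeeping.
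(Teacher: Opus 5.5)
Your proposal is correct and follows essentially the same route as the paper. You obtain \(Y\) from the inverse of \(\lambda_L\), derive \(\partial H/\partial p_\alpha=Y^\alpha\), \(\partial H/\partial x^i=-\partial L/\partial x^i\), \(\partial H/\partial z=-\partial L/\partial z\) by envelope cancellation, substitute into the equations of Theorem~\ref{thm:implicitdynamics}, and get the converse through \(\lambda_L^{-1}\). You are more explicit than the paper about the form \((x,p,z)\mapsto(x,Y,z)\) of the inverse, about the converse curve \((x,Y(x,p,z),z)\), and about the sign conventions, which the paper simply writes down. Your phrase ``applying \(\epsc\) to the contact differential of \(-H\)'' is only heuristic, since \(H\) lives on \(E^*\times\R\) rather than \(E\times\R\); citing the Jacobi-structure equations is the sounder justification.
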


\begin{proof}
Since \(L\) is hyperregular, \(\lambda_L\) is a global diffeomorphism. Therefore,
for every \((x,p,z)\in E^*\times\R\), there is a unique
\(y=Y(x,p,z)\in E_x\) such that $p_\alpha=\frac{\partial L}{\partial y^\alpha}(x,y,z)$.
This proves that \(H\) is well defined and smooth.

Differentiating \(H\) with respect to \(p_\alpha\), \(x^i\), and \(z\), and using
\(p_\alpha=\partial L/\partial y^\alpha\), all terms containing derivatives of
\(Y\) cancel. Hence
\[
\frac{\partial H}{\partial p_\alpha}=Y^\alpha,\qquad
\frac{\partial H}{\partial x^i}
=
-\frac{\partial L}{\partial x^i}(x,Y,z),
\qquad
\frac{\partial H}{\partial z}
=
-\frac{\partial L}{\partial z}(x,Y,z).
\]

Let \(\gamma(t)=(x(t),y(t),z(t))\) be a contact Tulczyjew trajectory and set
$p_\alpha(t)=\frac{\partial L}{\partial y^\alpha}(x(t),y(t),z(t))$. By Theorem~\ref{thm:implicitdynamics},
\[
\dot x^i=\rho^i_\alpha(x)y^\alpha,
\]
\[
\dot p_\alpha
=
\rho^i_\alpha(x)\frac{\partial L}{\partial x^i}
-
C^\gamma_{\alpha\beta}(x)y^\beta p_\gamma
+
\frac{\partial L}{\partial z}p_\alpha,
\]
and $\dot z=L(x,y,z)$.

Using the identities above and \(y=Y(x,p,z)\), these equations become
\[
\dot x^i
=
\rho^i_\alpha(x)\frac{\partial H}{\partial p_\alpha},
\]
\[
\dot p_\alpha
=
-\rho^i_\alpha(x)\frac{\partial H}{\partial x^i}
-
C^\gamma_{\alpha\beta}(x)
\frac{\partial H}{\partial p_\beta}p_\gamma
-
\frac{\partial H}{\partial z}p_\alpha,
\]
and
\[
\dot z
=
p_\alpha\frac{\partial H}{\partial p_\alpha}-H.
\]
Thus the Lagrangian and Hamiltonian contact equations are equivalent under
the hyperregular Legendre transformation. Conversely, any solution of the
Hamiltonian equations pulls back by \(\lambda_L^{-1}\) to a solution of the
contact-Tulczyjew equations.
\end{proof}

\begin{remark}
 The singular case should be understood in the relational sense. The generated
object is the subset $\mathcal R_L:=\operatorname{Im}(\Lambda_L)\subset T(E^*\times \mathbb R)$, and the dynamics is defined by the matching condition $t(\lambda_L\circ\gamma)\in \mathcal R_L$. If the contact Legendre map is regular, this condition may be solved locally as
a vector field. If it is hyperregular, this vector field is transformed by the
Legendre map into the contact Hamiltonian dynamics described above. If the
Legendre map is singular, however, no such resolution is available in general,
and the system should be treated as an implicit differential relation \cite{deLeonLainz2019}.

The passage from an implicit differential relation to an actual space of integrable trajectories is a separate geometric problem. In the general theory of implicit differential equations, this is handled by an integrability or constraint algorithm: one projects the relation to the phase space, imposes the tangency condition, projects again, and iterates until an integrable final relation is reached; see the foundational work of Mendella, Marmo and Tulczyjew~\cite{MendellaMarmoTulczyjew1995}. We do not perform here the corresponding stabilization procedure in the contact algebroid setting. Extending such an algorithm to contact Tulczyjew dynamics on skew algebroids would require a precontact constraint theory compatible with algebroid admissibility and with the line-bundle contact structure. This is a separate problem, and is left for future work.
\end{remark}

\begin{remark}
Recall that, for a Lie algebroid \(E\to M\) and a fibration \(P\to M\), the
standard prolongation is
$\mathcal T^E P:=E\times_{TM}TP$,
with anchor given by the projection to \(TP\); see
\cite{Martinez2001}. Thus, in the regular Lie algebroid case, the relation
$\mathcal R_L\subset T(E^*\times\mathbb R)$
can be equivalently described as the image, under the anchor $\rho_\pi:\mathcal T^E(E^*\times\mathbb R)\rightarrow T(E^*\times\mathbb R)$, of the Hamiltonian section of the contact Hamiltonian prolongation
\(\mathcal T^E(E^*\times\mathbb R)\). This is the prolongation-based contact
Hamiltonian picture developed in
\cite{SimoesColomboDeLeonSalgadoSouto2025Annali}, and it is compatible with the
Jacobi description of \(E^*\times\mathbb R\) in
\cite{ASCLMS2024}. The present formulation starts one step earlier: it produces the generated
subset $\mathcal R_L=\operatorname{Im}(\Lambda_L)\subset T(E^*\times\mathbb R)$
directly from the contact Tulczyjew morphism. Therefore it does not require, as
part of the construction, that the relation be the anchor image of a globally
defined or unique dynamical section. This distinction is important for singular
generating objects and for skew algebroids, where we work with the Tulczyjew
morphism and admissibility relation rather than with a full contact Lie
algebroid prolongation.
\end{remark}


\section{The contact Tulczyjew morphism in line-bundle form}
\label{sec:intrinsic}

We now explain the geometric origin of the local contact Tulczyjew morphism used in
Section~2. Let \(E\to M\) be an skew algebroid. The construction is naturally formulated over a line bundle
$\pi_{\Lbundle}:\Lbundle\rightarrow E^*$. After choosing a local trivialization of \(\Lbundle\), the corresponding local
contact phase space is represented by \(E^*\times\R\).

 Let $\varepsilon_E:T^*E\rightarrow TE^{*}$
be the Tulczyjew morphism and we shall also use the
vertical dual projection $\nu:T^*E\rightarrow E^*$, defined by restricting covectors on \(E\) to vertical vectors. In local coordinates,
$\nu(x^i,y^\alpha,p_i,p_\alpha)=(x^i,p_\alpha)$. Let \(\Delta_{E^*}\) denote the Euler vector field of the vector bundle
\(E^*\to M\), locally $\Delta_{E^*}=p_\alpha\frac{\partial}{\partial p_\alpha}$.

The line bundle \(\Lbundle\to E^*\) encodes the contact variable following
the line-bundle approach to contact and Jacobi geometry
\cite{GG2024,BGG2017}. Infinitesimal
motions in \(\Lbundle\) are described
by its Atiyah, or gauge, algebroid
\[
\mathcal D\Lbundle:=T\Lbundle^\times/\Rtimes,
\]
where \(\Lbundle^\times\) is the associated principal \(\Rtimes\)-bundle.
Equivalently, \(\mathcal D\Lbundle\) is the bundle of derivations of
\(\Lbundle\), see, for instance, \cite{Vitagliano2016}. There is a natural projection
$\mathcal D\Lbundle\longrightarrow TE^{*}$
which assigns to each derivation its base vector field. This gives the exact
sequence
\[
0\longrightarrow \operatorname{End}(\Lbundle)
\longrightarrow \mathcal D\Lbundle
\longrightarrow TE^*
\longrightarrow 0.
\]

Since \(\Lbundle\) is a line bundle, every fiber endomorphism of \(\Lbundle\) is multiplication
by a scalar. More precisely, for each \(\xi\in E^*\), an element
\(A\in \operatorname{End}(\Lbundle)_\xi\) is a linear map
\(A:\Lbundle_\xi\to \Lbundle_\xi\). Since \(\Lbundle_\xi\) is one-dimensional, \(A\) is completely
determined by its value on any nonzero element of \(\Lbundle_\xi\). Equivalently, after
choosing a local frame \(s\) of \(\Lbundle\), one has
$A(s(\xi))=a(\xi)s(\xi)$ for a unique scalar \(a(\xi)\). Thus, in the trivialization determined by \(s\),
\(\operatorname{End}(\Lbundle)\) is represented by the trivial line
\(E^*\times\mathbb R\). This identification depends on the chosen frame, whereas
the endomorphism \(A\) itself is intrinsic.

By a local frame of \(\Lbundle\) we mean a nowhere-vanishing local section \(s\) of
\(\Lbundle\). This should not be confused with the local basis \((e_\alpha)\) of
sections of the algebroid \(E\to M\). Such a section determines a local
trivialization of \(\Lbundle\): every element of the fiber \(\Lbundle_\xi\) can be written
uniquely as $\ell=z\,s(\xi)$ for a scalar \(z\). In the same trivialization, the vertical part of a
derivation is represented by a scalar.

Indeed, every derivation \(\delta\in D\Lbundle\) is represented locally by a pair
\((X,a)\), through
\[
\delta(fs)=\bigl(X(f)+af\bigr)s.
\]
Here \(X\) is the induced tangent vector on \(E^*\), while \(a\) is the local
vertical component. The scalar \(a\) depends on the chosen section \(s\), but
the derivation \(\delta\) does not.

In the trivialization \(\Lbundle\simeq E^*\times\mathbb R\), the frame \(s\) plays the
role of the unit section. Thus the scalar \(a\) is equivalently the value of the
vertical endomorphism on this local unit: it is the coefficient defined by $\delta(s)=a\,s$
after the base vector field part has been separated. This is the sense in which the endomorphism component is determined by what it
does to the local unit section of the trivialized line.

The contact contribution is obtained by adding to the ordinary Tulczyjew
direction the Euler direction on \(E^*\). Given \(\alpha_e\in T^*_eE\), the ordinary
Tulczyjew morphism gives
$\varepsilon_E(\alpha_e)\in T_{\nu(\alpha_e)}E^*$.

The Euler vector $\Delta_{E^*}(\nu(\alpha_e))\in T_{\nu(\alpha_e)}E^*$
belongs to the same tangent space. Hence, after choosing a local representative
\(r\) of the vertical contact momentum, the base tangent component of the
contactified morphism is
\[
\varepsilon_E(\alpha_e)+r\,\Delta_{E^*}(\nu(\alpha_e)).
\]

Intrinsically, \(r\) represents an element of
\(\operatorname{End}(\Lbundle)_{\nu(\alpha_e)}\), that is, a fiberwise infinitesimal
rescaling of the contact line over \(\nu(\alpha_e)\). Since the fiber is
one-dimensional, a local frame identifies this endomorphism with multiplication
by a scalar, again denoted by \(r\).

We define the contact momentum bundle over \(T^*E\) by
\[
\mc P^{\mathrm c}(E,\Lbundle)
:=
T^*E\times_{\nu,E^*}\Lbundle
\times_{E^*}\operatorname{End}(\Lbundle)
\times_{E^*}\Lbundle.
\]
A point will be written as \((\alpha_e,\ell,r,u)\), where
$\ell\in\Lbundle_{\nu(\alpha_e)}$, $r\in\operatorname{End}(\Lbundle)_{\nu(\alpha_e)}$ and 
$u\in\Lbundle_{\nu(\alpha_e)}$.
In a local frame, these components are represented by coordinates
$(x,y,p_x,p_y,z,r,u)$. The variable \(z\) represents the line-bundle value \(\ell\), the variable \(r\)
represents the vertical contact momentum, and \(u\) represents the contact value.

A line-bundle contact Tulczyjew morphism is a morphism
\[
\varepsilon_E^{\mathrm c}:
\mc P^{\mathrm c}(E,\Lbundle)
\longrightarrow
\mathcal D\Lbundle\times_{E^*}\Lbundle
\]
such that the projection of its \(\mathcal D\Lbundle\)-component onto \(TE^*\) is
$\varepsilon_E(\alpha_e)+r\,\Delta_{E^*}(\nu(\alpha_e))$. The last factor \(\Lbundle\) records the contact value \(u\). In a local
trivialization this becomes the scalar equation \(\dot z=u\).

\begin{theorem}
\label{thm:global-local-contact}
Let \(s\) be a nowhere-vanishing local section of \(\Lbundle\). In the local trivialization determined
by \(s\), the line-bundle contact Tulczyjew morphism
\(\varepsilon_E^{\mathrm c}\) is represented by
\begin{equation}
\varepsilon_E^{\mathrm c}(x,y,z,p_x,p_y,r,u)
=
\left(
x,p_y,z,
\rho(x)y,
\rho(x)p_x-C(x)(y,p_y)+r\,p_y,
u
\right).
\label{eq:global-local-epsc}
\end{equation}
Equivalently,
\[
\dot x^i=\rho^i_\alpha(x)y^\alpha,\qquad
\dot p_\alpha
=
\rho^i_\alpha(x)p_i
-
C^\gamma_{\alpha\beta}(x)y^\beta p_\gamma
+
r p_\alpha,\qquad
\dot z=u.
\]
After the identification \(r=p_z\), this is the local contact Tulczyjew morphism
of Section~2.
\end{theorem}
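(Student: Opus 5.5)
The proof is a direct computation in the trivialization determined by \(s\), so the plan is to write each factor of \(\mc P^{\mathrm c}(E,\Lbundle)\) and of \(\mathcal D\Lbundle\times_{E^*}\Lbundle\) in coordinates and then read off each component of \(\varepsilon_E^{\mathrm c}\). First, I would fix the identifications induced by \(s\). A point \(\ell\in\Lbundle_\xi\) is written \(\ell=z\,s(\xi)\). An endomorphism \(r\in\operatorname{End}(\Lbundle)_\xi\) is the scalar determined by \(r(s(\xi))=r\,s(\xi)\). The contact value \(u\in\Lbundle_\xi\) is written \(u=u\,s(\xi)\). Combined with the vector bundle coordinates \((x,y,p_x,p_y)\) on \(T^*E\) and the formula \(\nu(x,y,p_x,p_y)=(x,p_y)\), this yields the coordinates \((x,y,z,p_x,p_y,r,u)\) on the source.

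On the target, I would use the description of derivations given above: \(\delta\leftrightarrow(X,a)\) with \(\delta(fs)=(X(f)+af)s\). In this frame, \(\mathcal D\Lbundle\times_{E^*}\Lbundle\) is then locally \(TE^*\times\R\times\R\). Identifying a derivation of \(\Lbundle\) with a vector field on the total space \(\Lbundle\simeq E^*\times\R\) (the linear vector field \(X+a\,z\,\partial_z\), after which \(\dot z\) is recorded separately by the last factor), the point over \((\xi,\ell)\) has coordinates \((x,p_y,z,\dot x,\dot p,\dot z)\). The base point is \(\nu(\alpha_e)=(x,p_y)\) together with the line value \(z\).

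The second step computes the \(TE^*\)-component. By the defining property of \(\varepsilon_E^{\mathrm c}\), this component equals \(\varepsilon_E(\alpha_e)+r\,\Delta_{E^*}(\nu(\alpha_e))\). From \eqref{eq:epsE}, \(\varepsilon_E(x,y,p_x,p_y)=(x,p_y,\rho(x)y,\rho(x)p_x-C(x)(y,p_y))\). From \(\Delta_{E^*}=p_\alpha\partial_{p_\alpha}\), the Euler vector at \((x,p_y)\) has no \(\dot x\) component and has \(\dot p_\alpha=p_\alpha\). Adding fiberwise in \(T_{(x,p_y)}E^*\), which is legitimate because both vectors lie over the same point, gives \(\dot x=\rho(x)y\) and \(\dot p=\rho(x)p_x-C(x)(y,p_y)+r\,p_y\).

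The third step handles the \(\Lbundle\)-component, which records the contact value, so \(\dot z=u\) in the frame \(s\). Assembling the three steps gives \eqref{eq:global-local-epsc}. Comparison with \eqref{eq:epscmodel} under \(r=p_z\) then finishes the proof.

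The main obstacle is making the target identification precise. The statement only prescribes the projection of the \(\mathcal D\Lbundle\)-component to \(TE^*\) and the \(\Lbundle\)-factor, so I must argue that the local representative displayed in \eqref{eq:global-local-epsc} depends only on these data. The vertical scalar \(a\) of the derivation does not appear among the displayed coordinates; it is absorbed by recording \(\dot z\) through the \(u\)-factor. I would state this explicitly, and I would also check that the formula is frame-covariant only up to the expected rescaling of \(z,u,r\), without claiming more.
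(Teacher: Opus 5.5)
Your proposal is correct and follows the paper's proof: write $\varepsilon_E$ in coordinates via \eqref{eq:epsE}, add $r\,\Delta_{E^*}=r\,p_\alpha\partial_{p_\alpha}$ at $\nu(\alpha_e)=(x,p_y)$, read $z$ as the frame coordinate of $\ell$ and $\dot z=u$ from the last factor, and then set $r=p_z$. Your extra discussion of the target identification is more careful than the paper, which simply takes $\dot z=u$ as part of the definition; one small correction for the frame-covariance check you plan: since $\operatorname{End}(\Lbundle)$ is canonically trivial ($A(gs)=g\,A(s)=a\,gs$), the scalar $r$ does not rescale under a change of frame, and only $z$ and $u$ do.
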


\begin{proof}
In the local frame \(s\), the ordinary Tulczyjew morphism is the map recalled in
Section~2,
\[
\varepsilon_E(x,y,p_x,p_y)
=
\bigl(x,p_y,\rho(x)y,\rho(x)p_x-C(x)(y,p_y)\bigr).
\]

The Euler vector field on \(E^*\) is locally
$\Delta_{E^*}
=
p_\alpha\frac{\partial}{\partial p_\alpha}$. Therefore, at the point \((x,p_y)\in E^*\), its contribution is represented by
$r\,\Delta_{E^*}(x,p_y)=r\,p_\alpha\frac{\partial}{\partial p_\alpha}$. Thus the tangent component on \(E^*\) of the contactified morphism is represented by
\[
\bigl(
x,p_y,
\rho(x)y,
\rho(x)p_x-C(x)(y,p_y)+r\,p_y
\bigr).
\]
The element \(\ell\in\Lbundle_{\nu(\alpha_e)}\) is represented in the frame \(s\)
by the scalar coordinate \(z\), and the contact value
component \(u\) is represented by the scalar equation \(\dot z=u\). This gives
\eqref{eq:global-local-epsc}.
\end{proof}

\begin{corollary}
\label{cor:trivial-contactization}
If \(\Lbundle=E^*\times\R\) is the trivial line bundle and the standard frame is
chosen, then
\[
\mc P^{\mathrm c}(E,\Lbundle)
\simeq
T^*E\times\R_z\times\R_{p_z}\times\R_u,
\]
and \(\varepsilon_E^{\mathrm c}\) is exactly the morphism
\[
\varepsilon_E^{\mathrm c}(x,y,z,p_x,p_y,p_z,u)
=
\left(
x,p_y,z,
\rho(x)y,
\rho(x)p_x-C(x)(y,p_y)+p_zp_y,
u
\right)
\]
used in the local formulation.
\end{corollary}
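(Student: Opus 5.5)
The corollary follows by specializing Theorem~\ref{thm:global-local-contact} to the trivial line bundle $\Lbundle=E^*\times\R$ with its standard unit frame $s(\xi)=(\xi,1)$. The plan has two parts: first identify the contact momentum bundle, then read off the morphism from the local formula \eqref{eq:global-local-epsc} with $r$ renamed $p_z$.

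For the bundle identification, the standard frame gives $\Lbundle_\xi\simeq\R$ via $z\,s(\xi)\mapsto z$. As explained before the theorem, every $A\in\operatorname{End}(\Lbundle)_\xi$ is multiplication by a unique scalar $a(\xi)$, determined by $A(s(\xi))=a(\xi)s(\xi)$. Hence $\operatorname{End}(\Lbundle)\simeq E^*\times\R$. Because the frame is global, these identifications hold globally rather than only locally.

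Substituting into the definition of $\mc P^{\mathrm c}(E,\Lbundle)$ as a fibered product over $E^*$ via $\nu$, each factor trivial over $E^*$ contributes a copy of $\R$. This gives
\[
T^*E\times_{E^*}(E^*\times\R)\times_{E^*}(E^*\times\R)\times_{E^*}(E^*\times\R)\simeq T^*E\times\R_z\times\R_{p_z}\times\R_u,
\]
with $(\alpha_e,\ell,r,u)\mapsto(\alpha_e,z,r,u)$. We then relabel $r$ as $p_z$.

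For the morphism, we apply Theorem~\ref{thm:global-local-contact} with $s$ the standard frame. This yields \eqref{eq:global-local-epsc}, and substituting $r=p_z$ gives exactly \eqref{eq:epscmodel}. One consistency check remains: the $\mathcal D\Lbundle$ component. In the trivial case $\mathcal D\Lbundle\simeq TE^*\times\R$ via $(X,a)$, and together with the $\Lbundle$ factor the target is represented by the coordinates $(x,p_y,z,\dot x,\dot p,\dot z)$ of the local model, with $\dot z=u$.

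The only point requiring care is that the identifications are canonical once the standard frame is fixed. A different global frame $s'=fs$ would rescale $z$ and $u$ by $f^{-1}$ and shift the vertical derivation component by a term involving $X(f)/f$. The statement says "exactly", so it must refer to the standard frame, and with that frame no ambiguity remains.
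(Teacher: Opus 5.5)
Your proposal is correct and follows essentially the same route as the paper: because the standard frame of the trivial line bundle is global, the local representative of Theorem~\ref{thm:global-local-contact} becomes global, each of the three line-bundle factors (including $\operatorname{End}(\Lbundle)\simeq E^*\times\R$) contributes a copy of $\R$, and relabelling $r=p_z$ gives the displayed formula. Your extra remarks on the $\mathcal D\Lbundle$-component and on changes of frame go beyond what the paper's argument needs, and the paper does not make them.
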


\begin{proof}
For the trivial line bundle, the standard frame is global. Hence the line-bundle
coordinate, the vertical contact momentum, and the contact value are represented
globally by the scalars \(z\), \(p_z\), and \(u\), respectively. In particular,
\(\operatorname{End}(\Lbundle)\simeq E^*\times\R\), and the element
\(r\in\operatorname{End}(\Lbundle)\) is represented globally by \(p_z\).
Therefore the local representative of Theorem~\ref{thm:global-local-contact} is
global, and relabelling \(r=p_z\) gives the displayed formula.
\end{proof}

We now consider generating objects. Intrinsically, a contact generating object is a
line-bundle object. After choosing a local frame of \(\Lbundle\), it is represented
by a smooth function \(L:E\times\R\to\R\). For each fixed \(z\), write
\(L_z:E\to\R\) for \(L_z(e)=L(e,z)\). The local representative of its contact
differential is
\[
J^{1,\mathrm c}L(e,z)
=
\bigl(\dd_eL_z,z,\partial_zL(e,z),L(e,z)\bigr).
\]
Thus, in coordinates,
\[
J^{1,\mathrm c}L(x,y,z)
=
\left(
x,y,z,
\frac{\partial L}{\partial x},
\frac{\partial L}{\partial y},
\frac{\partial L}{\partial z},
L
\right),
\]
with the same ordering convention as in Section~2.

Composing this local contact differential with the line-bundle contact Tulczyjew morphism gives
the generated phase relation
\[
\Lambda_L:=\varepsilon_E^{\mathrm c}\circ J^{1,\mathrm c}L.
\]

\begin{proposition}
\label{prop:local-generated-relation}
In a local trivialization of \(\Lbundle\), the generated phase relation is
\[
\Lambda_L(x,y,z)
=
\left(
x,
\frac{\partial L}{\partial y},
z,
\rho(x)y,
\rho(x)\frac{\partial L}{\partial x}
-
C(x)\left(y,\frac{\partial L}{\partial y}\right)
+
\frac{\partial L}{\partial z}
\frac{\partial L}{\partial y},
L
\right).
\]
Equivalently,
\[
\dot x^i=\rho^i_\alpha(x)y^\alpha,\qquad
\dot p_\alpha=
\rho^i_\alpha(x)\frac{\partial L}{\partial x^i}
-
C^\gamma_{\alpha\beta}(x)y^\beta
\frac{\partial L}{\partial y^\gamma}
+
\frac{\partial L}{\partial z}
\frac{\partial L}{\partial y^\alpha},
\qquad
\dot z=L.
\]
\end{proposition}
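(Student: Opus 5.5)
The plan is a direct substitution argument: compose the local formula of Theorem~\ref{thm:global-local-contact} with the local contact differential $J^{1,\mathrm c}L$. First I would fix the local frame $s$ of $\Lbundle$ and write $J^{1,\mathrm c}L(x,y,z)$ in the coordinates $(x,y,z,p_x,p_y,r,u)$ of $\mc P^{\mathrm c}(E,\Lbundle)$. The point to check is that $J^{1,\mathrm c}L$ takes values in the fibred product $T^*E\times_{\nu,E^*}\Lbundle\times_{E^*}\operatorname{End}(\Lbundle)\times_{E^*}\Lbundle$ over the base point $\nu(\dd_e L_z)$. Locally $\nu(\dd_eL_z)=(x,\partial L/\partial y)$, so the remaining components $z$, $\partial_zL$ and $L$ are read as elements of the fibres over this point via the frame. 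Concretely, the identifications are
\[
p_x=\frac{\partial L}{\partial x},\qquad p_y=\frac{\partial L}{\partial y},\qquad r=\frac{\partial L}{\partial z},\qquad u=L .
\]
The identification $r=\partial_zL$ is consistent with the convention $r=p_z$ in Theorem~\ref{thm:global-local-contact} and Corollary~\ref{cor:trivial-contactization}.

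Next I would substitute these values into \eqref{eq:global-local-epsc}. The base components give $(x,\partial L/\partial y,z)$. The tangent component on $E^*$ is $\varepsilon_E(\dd_eL_z)+\partial_zL\,\Delta_{E^*}$, which by \eqref{eq:epsE} equals
\[
\rho(x)\frac{\partial L}{\partial x}-C(x)\Bigl(y,\frac{\partial L}{\partial y}\Bigr)+\frac{\partial L}{\partial z}\frac{\partial L}{\partial y}.
\]
The last component is $\dot z=u=L$. Writing these out with the convention $\bigl(C(x)(y,\xi)\bigr)_\alpha=C^\gamma_{\alpha\beta}y^\beta\xi_\gamma$ then gives the three displayed scalar equations.

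I expect the only real obstacle to be the bookkeeping in the first step, not the algebra. One must justify that reading $\partial_zL$ as an element of $\operatorname{End}(\Lbundle)$, and $z$ and $L$ as elements of $\Lbundle$, over the point $\nu(\dd_eL_z)$ is exactly the frame identification used in Theorem~\ref{thm:global-local-contact}. I would handle this by noting that everything is stated only in the chosen local trivialization. There $\operatorname{End}(\Lbundle)$ and $\Lbundle$ are both represented by $E^*\times\R$, so the identification is by definition, and no frame-independence claim is needed for this proposition.
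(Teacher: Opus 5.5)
Your proposal is correct and is essentially the paper's proof: the paper also substitutes $p_x=\partial L/\partial x$, $p_y=\partial L/\partial y$, $r=\partial L/\partial z$, $u=L$ into \eqref{eq:global-local-epsc} and reads off the components. Your extra remarks about the fibred product and the frame identifications are a harmless elaboration that the paper leaves implicit.
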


\begin{proof}
Substituting the local representative of the contact differential into
\eqref{eq:global-local-epsc} gives
\[
p_y=\frac{\partial L}{\partial y},\qquad
p_x=\frac{\partial L}{\partial x},\qquad
r=\frac{\partial L}{\partial z},\qquad
u=L.
\]
The formula follows immediately.
\end{proof}

The contact Legendre map is obtained by taking the vertical part of the contact differential.
Globally this is the map over \(E^*\) induced by the vertical projection
\(\nu:T^*E\to E^*\). In a local frame it is represented by \eqref{CLM}.

\begin{theorem}
\label{thm:recovery-local-dynamics}
Let \(\gamma(t)=(x(t),y(t),z(t))\) be a local representative of a curve in the
contactified algebroid side. The matching condition
\[
\mathbf t(\lambda_L\circ\gamma)=\Lambda_L\circ\gamma
\]
is represented, in any local trivialization, by the system
\[
\dot x^i=\rho^i_\alpha(x)y^\alpha,
\]
\[
\frac{\dd}{\dd t}
\left(
\frac{\partial L}{\partial y^\alpha}
\right)
=
\rho^i_\alpha(x)\frac{\partial L}{\partial x^i}
-
C^\gamma_{\alpha\beta}(x)y^\beta
\frac{\partial L}{\partial y^\gamma}
+
\frac{\partial L}{\partial z}
\frac{\partial L}{\partial y^\alpha},
\]
and
\[
\dot z=L(x,y,z).
\]
Thus the line-bundle construction recovers exactly the local contact Tulczyjew dynamics
of Theorem~\ref{thm:implicitdynamics}.
\end{theorem}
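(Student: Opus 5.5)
The argument reduces to Proposition~\ref{prop:local-generated-relation} and Theorem~\ref{thm:implicitdynamics}, once we check that the local representatives of both sides of the matching condition are those used in Section~2. Fix a nowhere-vanishing local section \(s\) of \(\Lbundle\) and write the curve as \(\gamma(t)=(x(t),y(t),z(t))\), where \(z\) is the coefficient of the \(\Lbundle\)-component with respect to \(s\).

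\emph{Step 1: the right-hand side.} By Theorem~\ref{thm:global-local-contact}, in the frame \(s\) the morphism \(\varepsilon_E^{\mathrm c}\) has the local form \eqref{eq:global-local-epsc}. Proposition~\ref{prop:local-generated-relation} then gives \(\Lambda_L=\varepsilon_E^{\mathrm c}\circ J^{1,\mathrm c}L\) explicitly. After identifying \(r=p_z=\partial L/\partial z\), this expression is the same as the \(\Lambda_L\) of Section~2.

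\emph{Step 2: the left-hand side.} The contact Legendre map is induced by \(\nu\) on the vertical part of \(J^{1,\mathrm c}L\), together with the \(\Lbundle\)-coordinate. In the frame \(s\) it is represented by \eqref{CLM}. Its tangent prolongation, projected to \(TE^*\) and paired with the time derivative of the \(\Lbundle\)-coordinate, has the local form \eqref{tangprolong}.

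\emph{Step 3: conclusion.} The matching condition \(\mathbf t(\lambda_L\circ\gamma)=\Lambda_L\circ\gamma\) therefore reads exactly as \eqref{eq:kinematicmatch}. Theorem~\ref{thm:implicitdynamics} then yields the three displayed equations, and conversely.

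The main obstacle is showing that the answer is the same in \emph{any} local trivialization, and checking that \(\dot z\) is correctly identified with the \(\Lbundle\)-factor \(u\). The natural first attempt is to change frame, \(s'=g s\) with \(g\) nowhere vanishing, so that \(z=g z'\). However, the representative of \(L\), and hence \(\partial_z L\) and \(r\), change together with the frame. The plan is therefore to argue directly that the proof of Theorem~\ref{thm:global-local-contact} used only an arbitrary frame \(s\). Consequently, for each trivialization, the local system obtained is the one stated, written with the representative \(L\) in that trivialization. The invariance claim then amounts to the fact that the construction was carried out frame by frame; no further compatibility check beyond the one already built into Theorem~\ref{thm:global-local-contact} is required.
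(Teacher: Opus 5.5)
Your proposal is correct and follows the paper's proof: the paper also takes the local form \eqref{tangprolong} of $\mathbf t(\lambda_L\circ\gamma)$, uses Proposition~\ref{prop:local-generated-relation} for $\Lambda_L\circ\gamma$, and equates components. The only differences are that you route the final comparison through Theorem~\ref{thm:implicitdynamics}, which performs the same component comparison, and that your closing reading of ``any local trivialization'' spells out what the paper leaves implicit: the computation is carried out in an arbitrary frame $s$, with $L$ its representative in that frame, and no frame-change compatibility is checked.
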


\begin{proof}
The tangent prolongation of the local Legendre curve is given by \eqref{tangprolong},
\[
\mathbf t(\lambda_L\circ\gamma)
=
\left(
x,
\frac{\partial L}{\partial y},
z,
\dot x,
\frac{\dd}{\dd t}\frac{\partial L}{\partial y},
\dot z
\right).
\]
By Proposition~\ref{prop:local-generated-relation}, \(\Lambda_L\circ\gamma\) is represented
by
\[
\left(
x,
\frac{\partial L}{\partial y},
z,
\rho(x)y,
\rho(x)\frac{\partial L}{\partial x}
-
C(x)\left(y,\frac{\partial L}{\partial y}\right)
+
\frac{\partial L}{\partial z}
\frac{\partial L}{\partial y},
L
\right).
\]
Equating components gives precisely the stated system.
\end{proof}

\begin{remark}
The scalar variables \(z\), \(r\), and \(u\) are not intrinsic objects. They are
local representatives of line-bundle data: \(z\) represents the line-bundle value,
\(r\) represents the vertical contact momentum, and \(u\) represents the contact
value which becomes \(\dot z\) in a local trivialization. The local term
\(p_zp_\alpha\) appearing in Section~2 is the representative of the Euler
contribution \(r\,\Delta_{E^*}\) to the tangent component on \(E^*\) of the
line-bundle contact Tulczyjew morphism.
\end{remark}

\section{Discrete contact dynamics from Tulczyjew relations}
\label{sec:discrete}

We now describe the discrete counterpart of the previous construction. The guiding
principle is that a generating object should define a geometric relation, and that
the discrete dynamics should be obtained by matching consecutive momenta. This is
the mechanism underlying discrete Lagrangian mechanics on symplectic groupoids
\cite{MarreroMartinDeDiegoStern2015,ColomboMartindeDiego2016,Colombo2014},
and it will serve here as the model for the contact construction.

In the symplectic case, the model example is the cotangent groupoid. If
\(\mathcal G\rightrightarrows P\) is a Lie groupoid with Lie algebroid
\(A\mathcal G\), then \(T^*\mathcal G\) carries a canonical symplectic groupoid
structure $T^*\mathcal G\rightrightarrows A^*\mathcal G$. Its source and target maps, which we denote by
$\widetilde\alpha,\widetilde\beta:T^*\mathcal G\longrightarrow A^*\mathcal G$, are the cotangent lifts of the infinitesimal source and target structures of
\(\mathcal G\). The precise formulas are not needed here; what matters is that
\(\widetilde\alpha\) and \(\widetilde\beta\) assign to a covector on
\(\mathcal G\) its left and right momenta in \(A^*\mathcal G\).

More generally, if
\(\widetilde{\mathcal G}\rightrightarrows P\) is a symplectic groupoid with
source and target maps
$\widetilde\alpha,\widetilde\beta:\widetilde{\mathcal G}\longrightarrow P$, then a Lagrangian submanifold
\(\Sigma\subset\widetilde{\mathcal G}\) defines a discrete relation by the
source--target matching condition
\[
\widetilde\beta(\mu_k)=\widetilde\alpha(\mu_{k+1}).
\]
For the cotangent groupoid
\(T^*\mathcal G\rightrightarrows A^*\mathcal G\), a discrete Lagrangian
\(L_d:\mathcal G\to\R\) generates the Lagrangian submanifold $\dd L_d(\mathcal G)\subset T^*\mathcal G$,
and this matching condition gives the usual discrete Euler--Lagrange equations
on Lie groupoids \cite{MarreroMartinDeDiegoStern2015}.

The contact analogue replaces Lagrangian submanifolds of symplectic groupoids
by Legendrian submanifolds of contact groupoids.

\begin{definition}
A \emph{contact groupoid} is a Lie groupoid
$\mathcal G^{\mathrm c}\rightrightarrows P^{\mathrm c}$
together with a multiplicative contact structure on \(\mathcal G^{\mathrm c}\).
Equivalently, in the coorientable case, it is represented locally by a contact
form \(\theta\in\Omega^1(\mathcal G^{\mathrm c})\) and a multiplicative function
\(\sigma:\mathcal G^{\mathrm c}\to\mathbb R\) such that, on the manifold of
composable pairs \((\mathcal G^{\mathrm c})^{(2)}\),
\begin{equation}\label{contgro}
m^*\theta
=
\operatorname{pr}_1^*\theta
+
e^{\sigma\circ\operatorname{pr}_1}\operatorname{pr}_2^*\theta,\end{equation} where \(m\) is the groupoid multiplication and
\(\operatorname{pr}_1,\operatorname{pr}_2:(\mathcal G^{\mathrm c})^{(2)}
\to\mathcal G^{\mathrm c}\) are the projections. In the line-bundle language,
this means that the contact distribution is encoded by a contact line bundle
and that the groupoid multiplication is compatible with this line-bundle contact
structure. Thus a contact groupoid is the contact counterpart of a symplectic
groupoid, where the symplectic form is replaced by a multiplicative contact
structure; see, for instance,
\cite{CrainicZhu2007,IglesiasMarrero2003,BGG2017}.
\end{definition}

\begin{remark}
There are several equivalent, or partially equivalent, ways of presenting
contact groupoids in the literature. In Definition~5.1 we use the local
coorientable form of the notion, following the convention of
\cite{IglesiasMarrero2003}: a contact groupoid is represented by a contact form
\(\theta\) and a multiplicative function \(\sigma\) satisfying the conformal
multiplicativity condition \eqref{contgro}. This is the most convenient representative for the local discussion below,
where the contact line bundle is trivialized and the contact variable is written
as an ordinary scalar.

This should be distinguished from the more intrinsic formulation of contact
groupoids in terms of homogeneous symplectic principal
\(\mathbb R^\times\)-bundles, or equivalently \(\mathbb R^\times\)-groupoids
with a homogeneous multiplicative symplectic structure, as emphasized in
\cite{BGG2017}. That formulation does not require the choice of a global contact
form and is better adapted to nontrivial contact line bundles. In the
coorientable/trivialized case, however, the homogeneous formulation reduces
locally to the conformal contact-form description above. Thus, throughout this
section, the contact-form definition should be read as a local representative of
the intrinsic line-bundle picture, rather than as a restriction on the global
contact geometry.
\end{remark}

\begin{definition}
Let \((C,\mathcal H)\) be a contact manifold, where
\(\mathcal H\subset TC\) is the contact distribution. A submanifold
\(\Sigma\subset C\) is called \emph{Legendrian} if
$T\Sigma\subset \mathcal H$ and \(\Sigma\) is maximal with this property. Equivalently, if the contact
structure is locally represented by a contact form \(\theta\), then $\theta|_{\Sigma}=0$ and
\[
\dim \Sigma=\frac{1}{2}(\dim C-1).
\]
In the line-bundle formulation, this condition is independent of the chosen
local contact form and means that \(\Sigma\) is a maximal submanifold tangent to
the intrinsic contact distribution.
\end{definition}

Thus, if $\mathcal G^{\mathrm c}\rightrightarrows P^{\mathrm c}$ is a contact groupoid, a Legendrian submanifold $\Sigma^{\mathrm c}\subset \mathcal G^{\mathrm c}$ is the contact-geometric counterpart of a Lagrangian submanifold in a symplectic
groupoid. This is the groupoid-level object which generates a contact relation
on the base. In this sense, Legendrian submanifolds play, in contact geometry,
the role played by Lagrangian submanifolds in symplectic geometry; see
\cite{Geiges2008,EsenLainzDeLeonMarrero2021}.

Suppose then that a contact groupoid
$\mathcal G^{\mathrm c}\rightrightarrows P^{\mathrm c}$
is given, with source and target maps
$\alpha^{\mathrm c},\beta^{\mathrm c}:
\mathcal G^{\mathrm c}\longrightarrow P^{\mathrm c}$.
Here \(P^{\mathrm c}\) denotes the contact phase space, or its line-bundle
contactification, and \(\mathcal G^{\mathrm c}\) denotes the groupoid carrying
the multiplicative contact structure.

The pair of source and target maps restricts to a smooth map
$(\alpha^{\mathrm c},\beta^{\mathrm c})|_{\Sigma^{\mathrm c}}:
\Sigma^{\mathrm c}\rightarrow P^{\mathrm c}\times P^{\mathrm c}$. Its image defines the contact relation generated by
\(\Sigma^{\mathrm c}\),
\[
R_{\Sigma^{\mathrm c}}
:=
(\alpha^{\mathrm c},\beta^{\mathrm c})(\Sigma^{\mathrm c})
\subset P^{\mathrm c}\times P^{\mathrm c}.
\]
Equivalently,
\begin{equation}\label{RS}
R_{\Sigma^{\mathrm c}}
=
\left\{
(p_-,p_+)\in P^{\mathrm c}\times P^{\mathrm c}
\;:\;
\exists\,\eta\in\Sigma^{\mathrm c},\quad
p_-=\alpha^{\mathrm c}(\eta),\quad
p_+=\beta^{\mathrm c}(\eta)
\right\}.
\end{equation}
Thus \(R_{\Sigma^{\mathrm c}}\) is a relation on the contact phase space
\(P^{\mathrm c}\), not necessarily the graph of a map. When the restricted map
\((\alpha^{\mathrm c},\beta^{\mathrm c})|_{\Sigma^{\mathrm c}}\) has constant
rank, \(R_{\Sigma^{\mathrm c}}\) is an immersed submanifold of
\(P^{\mathrm c}\times P^{\mathrm c}\); in general, it should be understood as
an immersed relation.

A discrete trajectory is obtained by concatenating generated steps. A two-step
concatenation is possible precisely when the final phase point of the first
step coincides with the initial phase point of the second one. In terms of the
source and target maps, this means that a pair
\((\eta_k,\eta_{k+1})\in\Sigma^{\mathrm c}\times\Sigma^{\mathrm c}\) lies in
the fiber product
\[
\Sigma^{\mathrm c}
\times_{\beta^{\mathrm c},\,P^{\mathrm c},\,\alpha^{\mathrm c}}
\Sigma^{\mathrm c}
:=
\left\{
(\eta_k,\eta_{k+1})\in\Sigma^{\mathrm c}\times\Sigma^{\mathrm c}
\;:\;
\beta^{\mathrm c}(\eta_k)=\alpha^{\mathrm c}(\eta_{k+1})
\right\}.
\]
Therefore a discrete trajectory is a sequence
\((\eta_k)\subset\Sigma^{\mathrm c}\) such that $\beta^{\mathrm c}(\eta_k)=\alpha^{\mathrm c}(\eta_{k+1})$ for all \(k\). If $p_k
:=
\beta^{\mathrm c}(\eta_{k-1})
=
\alpha^{\mathrm c}(\eta_k)$, then each element \(\eta_k\in\Sigma^{\mathrm c}\) determines a step
\[
(p_{k-1},p_k)
=
\left(
\alpha^{\mathrm c}(\eta_k),
\beta^{\mathrm c}(\eta_k)
\right)
\in R_{\Sigma^{\mathrm c}}.
\]
Thus the induced sequence \((p_k)\) in \(P^{\mathrm c}\) satisfies $(p_{k-1},p_k)\in R_{\Sigma^{\mathrm c}}\text{ for all } k$.

Conversely, a sequence \((p_k)\) satisfying this condition is a trajectory of
the relation \(R_{\Sigma^{\mathrm c}}\), whenever the corresponding steps can
be lifted to elements \(\eta_k\in\Sigma^{\mathrm c}\). Hence the source--target
matching condition on \(\Sigma^{\mathrm c}\) is the lifted form of the
trajectory condition for the relation \(R_{\Sigma^{\mathrm c}}\).

\medskip

Before turning to the local algebroid construction, let us also recall the
regular case. A local bisection of a groupoid $\mathcal G^{\mathrm c}\rightrightarrows P^{\mathrm c}$ is a submanifold \(B\subset\mathcal G^{\mathrm c}\) such that the restrictions $\alpha^{\mathrm c}|_B:B\to P^{\mathrm c}$ and $\beta^{\mathrm c}|_B:B\to P^{\mathrm c}$ are local diffeomorphisms onto open subsets of \(P^{\mathrm c}\). Equivalently,
near each point of \(B\), the source map parametrizes the bisection and the
target map defines a local transformation of the base,
$\Phi_B
=
\beta^{\mathrm c}|_B
\circ
\left(\alpha^{\mathrm c}|_B\right)^{-1}$. This is the standard notion of local bisection used in the groupoid formulation
of discrete mechanics \cite{MarreroMartinDeDiegoStern2015} and in Jacobi/contact groupoids
\cite{IglesiasMarrero2003,CrainicZhu2007}.

If \(P^{\mathrm c}\) is represented locally by a contact form \(\theta\), a
local diffeomorphism \(\Phi:P^{\mathrm c}\to P^{\mathrm c}\) is a contact
transformation when it preserves the contact distribution,
\[
T\Phi(\ker\theta)=\ker\theta .
\]
Equivalently, there exists a nowhere-vanishing smooth function \(f\) such that
\(\Phi^*\theta=f\,\theta\). Thus, after choosing a local contact form, contact
transformations are represented by conformal contact transformations
\cite{Geiges2008}. In the line-bundle formulation, this is the local expression
of preserving the intrinsic contact line-bundle structure; see, for instance,
\cite{BGG2017,GG2024}.

For contact groupoids, local Legendrian bisections act on the base by contact
transformations \cite{IglesiasMarrero2003, CrainicZhu2007}. This is the contact analogue of the standard symplectic
groupoid fact that Lagrangian bisections act by Poisson transformations on the
base \cite{MarreroMartinDeDiegoStern2015}. Equivalently, it follows by passing to the associated homogeneous
symplectic \(\Rtimes\)-principal bundle
\cite{BGG2017}. Hence, if the relation
\(R_{\Sigma^{\mathrm c}}\) is locally generated by a Legendrian bisection, it is
locally the graph of a contact, or conformal contact, transformation. Without
this regularity, the natural object remains the relation
\(R_{\Sigma^{\mathrm c}}\).

\begin{theorem}
\label{thm:regular-legendrian-relation}
Let \(\mathcal G^{\mathrm c}\rightrightarrows P^{\mathrm c}\) be a contact
groupoid, and let \(\Sigma^{\mathrm c}\subset\mathcal G^{\mathrm c}\) be a
Legendrian submanifold. Let
$R_{\Sigma^{\mathrm c}}
:=
(\alpha^{\mathrm c},\beta^{\mathrm c})(\Sigma^{\mathrm c})
\subset P^{\mathrm c}\times P^{\mathrm c}$
be the relation \eqref{RS} defined by \(\Sigma^{\mathrm c}\). Assume that, near
\(\eta_0\in\Sigma^{\mathrm c}\), the restriction
$\alpha^{\mathrm c}|_{\Sigma^{\mathrm c}}:\Sigma^{\mathrm c}\rightarrow
P^{\mathrm c}$ is a local diffeomorphism onto its image. Then, near
\((\alpha^{\mathrm c}(\eta_0),\beta^{\mathrm c}(\eta_0))\), the relation
\(R_{\Sigma^{\mathrm c}}\) is the graph of the local map
$\Phi_{\Sigma^{\mathrm c}}
=
\beta^{\mathrm c}|_{\Sigma^{\mathrm c}}
\circ
\left(\alpha^{\mathrm c}|_{\Sigma^{\mathrm c}}\right)^{-1}$.

If, moreover, \(\Sigma^{\mathrm c}\) is a local Legendrian bisection, then
\(\Phi_{\Sigma^{\mathrm c}}\) is a local contact transformation of
\(P^{\mathrm c}\). Equivalently, after choosing a local contact form \(\theta\),
there exists a nowhere-vanishing smooth function \(f\) such that
$\Phi_{\Sigma^{\mathrm c}}^*\theta=f\,\theta$.
\end{theorem}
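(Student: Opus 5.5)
The theorem has two parts, and I will treat them separately: the first is point-set/differential topology, the second uses the multiplicative contact structure.

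\emph{First part.} Set $U\subset\Sigma^{\mathrm c}$ to be a neighbourhood of $\eta_0$ on which $a:=\alpha^{\mathrm c}|_{U}$ is a diffeomorphism onto an open set $V\subset P^{\mathrm c}$, and write $b:=\beta^{\mathrm c}|_U$. Every pair in $(\alpha^{\mathrm c},\beta^{\mathrm c})(U)$ has the form $(a(\eta),b(\eta))=(p,\,b\circ a^{-1}(p))$ with $p=a(\eta)\in V$. Hence $(\alpha^{\mathrm c},\beta^{\mathrm c})(U)$ is exactly the graph of $\Phi_{\Sigma^{\mathrm c}}=b\circ a^{-1}:V\to P^{\mathrm c}$, which is smooth. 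To say that $R_{\Sigma^{\mathrm c}}$ coincides with this graph near $(\alpha^{\mathrm c}(\eta_0),\beta^{\mathrm c}(\eta_0))$, I will read ``near'' as the germ of the relation generated by the germ of $\Sigma^{\mathrm c}$ at $\eta_0$, i.e.\ the local relation given by \eqref{RS} with $\Sigma^{\mathrm c}$ replaced by $U$. The paper already treats $R_{\Sigma^{\mathrm c}}$ as an immersed relation, and this is the natural reading. I will state this interpretation explicitly, since distant sheets of $\Sigma^{\mathrm c}$ could otherwise contribute further points.

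\emph{Second part.} Assume now that $U$ is a Legendrian local bisection. Then $b$ is also a local diffeomorphism, so $\Phi:=\Phi_{\Sigma^{\mathrm c}}$ is a local diffeomorphism. The key step is to show that $\Phi^*\theta_P=f\,\theta_P$ for a local contact form $\theta_P$ on the base. The contact structure on $P^{\mathrm c}$ is induced by the groupoid, and the paper cites \cite{IglesiasMarrero2003,CrainicZhu2007} for the fact that Legendrian bisections act by contact transformations. My plan is to derive this from \eqref{contgro} in the following steps.

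\begin{enumerate}
\item Use \eqref{contgro} to show that the base contact structure is characterised by $\ker\theta\cap\ker T\alpha^{\mathrm c}$ and $\ker T\beta^{\mathrm c}$ being $\theta$-orthogonal in the conformal-symplectic sense. Equivalently, $\ker\theta$ projects under $T\beta^{\mathrm c}$ (and under $T\alpha^{\mathrm c}$) onto the base contact distribution $\mathcal H_P$.
\item For $v\in\mathcal H_P$ at $p$, lift it to $w=Ta^{-1}v\in T_\eta\Sigma^{\mathrm c}$. Then $\theta(w)=0$ because $\Sigma^{\mathrm c}$ is Legendrian.
\item Multiply the bisection against the identity (units), or against arbitrary arrows $g$ with source $\beta^{\mathrm c}(\eta)$, using $m^*\theta=\mathrm{pr}_1^*\theta+e^{\sigma\circ\mathrm{pr}_1}\mathrm{pr}_2^*\theta$. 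This shows that $T\Phi(v)$ lies in $\mathcal H_P$.
\end{enumerate}

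The conformal factor $f$ then comes out as $e^{\pm\sigma}$ composed with $a^{-1}$, which is nowhere vanishing.

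I expect the main obstacle to be the identification of the base contact distribution through \eqref{contgro}. The paper does not spell out how $P^{\mathrm c}$ inherits its structure, so I would fall back on citing \cite{IglesiasMarrero2003,CrainicZhu2007}, or on passing to the homogeneous symplectic $\Rtimes$-groupoid of \cite{BGG2017}. In that setting the Legendrian bisection lifts to a homogeneous Lagrangian bisection, which acts by a homogeneous Poisson (hence Jacobi) automorphism. That automorphism descends to a conformal contact transformation, giving $\Phi^*\theta=f\theta$.
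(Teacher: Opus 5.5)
Your first part is the paper's argument. Your decision to read ``near'' as the relation generated by the sheet $U$ is in fact more careful than the paper. Its proof asserts that $R_{\Sigma^{\mathrm c}}\cap(V\times\beta^{\mathrm c}(U))$ is the graph of $\Phi_{\Sigma^{\mathrm c}}$. That can fail when a part of $\Sigma^{\mathrm c}$ away from $U$ has source in $V$ and target in $\beta^{\mathrm c}(U)$ off the graph, for instance a second Legendrian sheet over the same pair of base points. For the second part, the paper derives nothing from \eqref{contgro}. It simply invokes \cite{IglesiasMarrero2003,CrainicZhu2007} (or \cite{BGG2017}) for the fact that Legendrian bisections act by contact transformations, and then rewrites preservation of the contact hyperplanes as $\Phi_{\Sigma^{\mathrm c}}^*\theta=f\theta$. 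Your fallback is therefore exactly the paper's proof.

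The self-contained derivation you sketch would not go through as written, because Step 1 is false.
\begin{itemize}
\item The $\beta^{\mathrm c}$-fibres have dimension $\dim P^{\mathrm c}+1=\frac12(\dim\mathcal G^{\mathrm c}+1)$. This exceeds the maximal dimension of an integral submanifold of $\ker\theta$, so $\ker T\beta^{\mathrm c}\not\subset\ker\theta$. Hence $T\beta^{\mathrm c}(\ker\theta)$ is the whole tangent space of the base, not $\mathcal H_P$, and likewise for $\alpha^{\mathrm c}$.
\item You can also see the problem because Steps 1--2 never use $v\in\mathcal H_P$. Since all of $T\Sigma^{\mathrm c}$ lies in $\ker\theta$, they would give $T\Phi_{\Sigma^{\mathrm c}}(TP^{\mathrm c})\subset\mathcal H_P$, which is impossible for a local diffeomorphism.
\end{itemize}

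The actual bridge to the base is that, with the convention \eqref{contgro}, $\alpha^{\mathrm c}$ is a Jacobi map onto the base and $\beta^{\mathrm c}$ a conformal one.

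\emph{Model case.} Take the contact groupoid $P^{\mathrm c}\times P^{\mathrm c}\times\R$ of a contact manifold $(P^{\mathrm c},\vartheta)$, with $(x,y,t)(y,w,s)=(x,w,t+s)$. Here $\theta=(\alpha^{\mathrm c})^*\vartheta-e^{\sigma}(\beta^{\mathrm c})^*\vartheta$ with $\sigma=t$, and this satisfies \eqref{contgro}. Then $\theta|_U=0$ gives $a^*\vartheta=e^{\sigma}b^*\vartheta$, hence $\Phi_{\Sigma^{\mathrm c}}^*\vartheta=(e^{-\sigma}\circ a^{-1})\,\vartheta$, which matches your guess for $f$.

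\emph{General case.} Your Step 3 is the right idea if used this way. Set $\phi:=(\beta^{\mathrm c}|_{\Sigma^{\mathrm c}})^{-1}\circ\alpha^{\mathrm c}$ and let $\ell(g)=m(\phi(g),g)$ be left translation by the bisection. By \eqref{contgro} and $\theta|_{\Sigma^{\mathrm c}}=0$, one gets $\ell^*\theta=\phi^*\theta+e^{\sigma\circ\phi}\theta=e^{\sigma\circ\phi}\theta$. Moreover $\alpha^{\mathrm c}\circ\ell=\Phi_{\Sigma^{\mathrm c}}^{-1}\circ\alpha^{\mathrm c}$, and the conformal factor $\sigma\circ\phi$ is $\alpha^{\mathrm c}$-basic. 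Hence $\Phi_{\Sigma^{\mathrm c}}^{-1}$, and so $\Phi_{\Sigma^{\mathrm c}}$, is a conformal Jacobi map of the base.

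This gives a conformal contactomorphism only because $P^{\mathrm c}$ is assumed to be contact; the base of a general contact groupoid is merely Jacobi.
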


\begin{proof}
We first prove the graph statement. Since
\(\alpha^{\mathrm c}|_{\Sigma^{\mathrm c}}\) is a local diffeomorphism near
\(\eta_0\), there exist neighborhoods $U\subset\Sigma^{\mathrm c}$, and $V\subset P^{\mathrm c}$, with \(\eta_0\in U\) and \(\alpha^{\mathrm c}(\eta_0)\in V\), such that
$\alpha^{\mathrm c}|_U:U\rightarrow V$ is a diffeomorphism. Therefore every source point \(p_-\in V\) determines a
unique element $\eta
=
\left(\alpha^{\mathrm c}|_U\right)^{-1}(p_-)
\in U$.

The corresponding target point is then uniquely determined by
$p_+
=
\beta^{\mathrm c}(\eta)
=
\beta^{\mathrm c}|_U
\circ
\left(\alpha^{\mathrm c}|_U\right)^{-1}(p_-)$. Hence, locally near \((\alpha^{\mathrm c}(\eta_0),\beta^{\mathrm c}(\eta_0))\), one has
\[
R_{\Sigma^{\mathrm c}}\cap (V\times \beta^{\mathrm c}(U))
=
\{(p^-,\Phi_{\Sigma^{\mathrm c}}(p^-)):p^-\in V\},
\]
where
\[
\Phi_{\Sigma^{\mathrm c}}
=
\beta^{\mathrm c}|_U\circ(\alpha^{\mathrm c}|_U)^{-1}.
\]
Indeed, the local invertibility of \(\alpha^{\mathrm c}|_U\) gives a unique
\(\eta\in U\) over each \(p^-\in V\), and hence a unique corresponding
target point \(p^+=\beta^{\mathrm c}(\eta)\). Thus
\(R_{\Sigma^{\mathrm c}}\) is locally the graph of
\(\Phi_{\Sigma^{\mathrm c}}\).

Assume now that \(\Sigma^{\mathrm c}\) is a local bisection. Then both
\(\alpha^{\mathrm c}|_{\Sigma^{\mathrm c}}\) and
\(\beta^{\mathrm c}|_{\Sigma^{\mathrm c}}\) are local diffeomorphisms onto open
subsets of \(P^{\mathrm c}\). Consequently,
$\Phi_{\Sigma^{\mathrm c}}
=
\beta^{\mathrm c}|_{\Sigma^{\mathrm c}}
\circ
\left(\alpha^{\mathrm c}|_{\Sigma^{\mathrm c}}\right)^{-1}$
is a local diffeomorphism of the base.

It remains to use the Legendrian condition. For contact groupoids, local
Legendrian bisections act on the base by local contact transformations
\cite{IglesiasMarrero2003,CrainicZhu2007}. This is the contact analogue of the
standard symplectic groupoid statement that Lagrangian bisections act by
Poisson transformations \cite{MarreroMartinDeDiegoStern2015}. Equivalently,
it follows by passing to the associated homogeneous symplectic
\(\Rtimes\)-principal bundle \cite{BGG2017}. Therefore
\(\Phi_{\Sigma^{\mathrm c}}\) preserves the contact structure on
\(P^{\mathrm c}\).

In a local contact form \(\theta\), preservation of the contact distribution is $T\Phi_{\Sigma^{\mathrm c}}(\ker\theta)=\ker\theta$. Equivalently, the one-forms \(\theta\) and
\(\Phi_{\Sigma^{\mathrm c}}^*\theta\) have the same kernel. Since both define
the same cooriented contact hyperplane distribution locally, there exists a
nowhere-vanishing smooth function \(f\) such that
$\Phi_{\Sigma^{\mathrm c}}^*\theta=f\,\theta$. Thus the map induced by a local Legendrian bisection is contact, or conformal
contact after choosing a local contact form.
\end{proof}

The preceding discussion should be read as the groupoid-level model for the
local construction below. We do not construct here a global contact groupoid
associated with a general skew algebroid. In fact, for a general skew
algebroid there need not be an integrating Lie groupoid. Thus, in what follows,
the role of the groupoid-level Legendrian relation is played locally by a
discrete contact Tulczyjew relation associated with a chosen admissibility
relation \(\mc A_d\subset E\times E\).

We work in a fixed local trivialization of the line bundle \(\Lbundle\to E^*\),
so that the contact variable is represented by a scalar \(z\), and we fix a
discrete admissibility relation \(\mc A_d\subset E\times E\). This relation plays the role of a local replacement for groupoid-level
composability, or for a constrained version of composability in which only
certain discrete steps are allowed. The discrete contact Tulczyjew relation constructed below
is the local representative of the Legendrian relation described above.

\begin{definition}
A discrete admissibility relation on \(E\) is a submanifold
$\mc A_d\subset E\times E$. Its elements \((a_0,a_1)\in\mc A_d\) are said to be admissible consecutive
discrete states. A discrete admissible path is a sequence
\((a_0,\dots,a_N)\) in \(E\) such that $(a_k,a_{k+1})\in\mc A_d$ for $k=0,\dots,N-1$.
\end{definition}

The relation \(\mc A_d\) should be understood as a chosen discrete replacement
of the continuous admissibility condition \(\dot x=\rho(a)\). It specifies
which pairs of algebroid elements are allowed to occur as consecutive discrete
states. Thus \(\mc A_d\) is part of the discretization data; it is not a
canonical object determined by the skew algebroid alone. In the integrable
Lie algebroid case, one expects such a relation to arise from composability,
or from a constraint relation, on an integrating groupoid. For a general skew
algebroid, where such a groupoid need not exist, \(\mc A_d\) has to be prescribed
as part of the discrete model.

Let \(\mc A_d^{\mathrm c}:=\mc A_d\times\R\), and let
\(L_d:\mc A_d^{\mathrm c}\to\R\) be a discrete contact Lagrangian, or discrete
contact generating object. The discrete Herglotz update is
\begin{equation}
z_{k+1}=z_k+L_d(a_k,a_{k+1},z_k).
\label{eq:discrete-herglotz-update}
\end{equation}
This is the same additive convention used in discrete contact mechanics \cite{SimoesMartinDeDiegoLainzDeLeon2021} for
\(L_d:Q\times Q\times\R\to\R\). The difference is that \(Q\times Q\) is
replaced by the admissibility relation \(\mc A_d\) on the skew algebroid $E$.

Let $\operatorname{pr}_0,\operatorname{pr}_1:E\times E\longrightarrow E$ denote the projections onto the first and second factors. For
\((a_0,a_1)\in E\times E\), with \(a_j\in E_{x_j}\), a tangent vector to
\(E\times E\) is vertical in the first factor if it is tangent to the fiber
\(E_{x_0}\) in the \(a_0\)-slot and has zero component in the base variables of
that slot. Similarly for the second factor. Restricting a covector on
\(E\times E\) to these vertical directions gives two maps
\[
\nu_1:T^*(E\times E)\longrightarrow E^*,
\qquad
\nu_2:T^*(E\times E)\longrightarrow E^*,
\]
covering the base points \(x_0\) and \(x_1\), respectively.

More explicitly, choose local coordinates \((x^i,y^\alpha)\) on \(E\), associated
with a local basis \((e_\alpha)\) of section of \(E\to M\) and dual basis
\((e^\alpha)\) of \(E^*\to M\). On \(E\times E\) we use coordinates $(x_0^i,y_0^\alpha,x_1^i,y_1^\alpha)$. If
\[
\alpha
=
A_i\dd x_0^i+B_\alpha\dd y_0^\alpha
+
C_i\dd x_1^i+D_\alpha\dd y_1^\alpha
\in T^*_{(a_0,a_1)}(E\times E),
\]
then the vertical restrictions are
\[
\nu_1(\alpha)=B_\alpha e^\alpha(x_0)\in E^*_{x_0},
\qquad
\nu_2(\alpha)=D_\alpha e^\alpha(x_1)\in E^*_{x_1}.
\]
Thus \(\nu_1\) extracts the fiber momentum conjugate to the first discrete
algebroid variable, while \(\nu_2\) extracts the fiber momentum conjugate to the
second one.

If \(\mathcal A_d\) is a proper submanifold of \(E \times E\), the differential \(dL_d\)
is a covector along \(\mathcal A_d\), not a covector on the ambient space \(E \times E\).
In local formulas we choose an extension \(\widetilde L_d\) of \(L_d\) to a
neighborhood of \(\mathcal A_d\) and apply \(\nu_1,\nu_2\) to \(d\widetilde L_d\).
Changing the extension changes \(d\widetilde L_d\) by an element of the conormal
bundle \(N^*\mathcal A_d\). Thus the corresponding momenta are defined modulo the usual
conormal ambiguity of constrained Tulczyjew constructions (see for instance \cite{deLeonJimenezMartinDeDiego2012}).

The expressions \(\nu_1(dL_d)\) and \(\nu_2(dL_d)\) should therefore be read as
elements defined modulo the images of \(N^*\mathcal A_d\) under the vertical projections
\(\nu_1\) and \(\nu_2\), respectively. Equivalently, the matching condition is
the vanishing of the corresponding covector on admissible interior variations.
In what follows, \(\nu_1(dL_d)\) and \(\nu_2(dL_d)\) denote any such local
representatives of the left and right momenta.

We shall use the notation
\[
\sigma_d(a_0,a_1,z):=1+\frac{\partial L_d}{\partial z}(a_0,a_1,z).
\]
This is the non-vanishing factor which appears in the discrete Herglotz
equations and in the discrete contact Legendre transformations. Assuming \(\sigma_d\neq 0\), the right and left discrete contact Legendre maps
are
\[
\FL_d^+(a_0,a_1,z):=\nu_2(\dd L_d(a_0,a_1,z)),
\qquad
\FL_d^-(a_0,a_1,z):=
-\sigma_d(a_0,a_1,z)^{-1}\nu_1(\dd L_d(a_0,a_1,z)).
\]
In local coordinates,
\[
\FL_d^+=\frac{\partial L_d}{\partial y_1},
\qquad
\FL_d^-=-\frac{1}{1+\partial_zL_d}\frac{\partial L_d}{\partial y_0}.
\]
For \(L_d\) independent of \(z\), one has \(\sigma_d=1\), and these reduce to
the usual discrete Legendre maps.

\begin{definition}The discrete contact Tulczyjew relation generated by \(L_d\) is the image of the
map
\[
\mc A_d\times\R\longrightarrow E^*\times E^*\times\R\times\R
\]
defined by
\[
(a_0,a_1,z)\longmapsto
\left(
\FL_d^-(a_0,a_1,z),
\FL_d^+(a_0,a_1,z),
z,
z+L_d(a_0,a_1,z)
\right).
\]
Equivalently, \(\mc R_{L_d}\subset E^*\times E^*\times\R\times\R\) consists of
the quadruples \((p^-,p^+,z,z')\) for which there exists
\((a_0,a_1)\in\mc A_d\), with \(a_0\in E_{x_0}\) and \(a_1\in E_{x_1}\), such
that
\[
p^-=\FL_d^-(a_0,a_1,z)\in E^*_{x_0},
\qquad
p^+=\FL_d^+(a_0,a_1,z)\in E^*_{x_1},
\qquad
z'=z+L_d(a_0,a_1,z).
\]
Thus \(\mc R_{L_d}\) is the discrete analogue of the continuous generated
relation \(\Lambda_L=\varepsilon_E^{\mathrm c}\circ J^{1,\mathrm c}L\).\end{definition}

\begin{remark} In
the groupoid interpretation, $\mc R_{L_d}$ is the local representative of the Legendrian
relation \(\Sigma^{\mathrm c}\subset\mathcal G^{\mathrm c}\). The discrete
dynamics is obtained by concatenating elements of \(\mc R_{L_d}\), namely by
matching the outgoing momentum of one step with the incoming momentum of the
next.\end{remark}

\begin{definition}
A discrete admissible path \((a_0,\dots,a_N)\) is called a discrete Herglotz
extremal if, for a fixed initial value \(z_0\), the sequence
\((z_0,\dots,z_N)\) defined recursively by
\[
z_{k+1}=z_k+L_d(a_k,a_{k+1},z_k)
\]
has stationary terminal value \(z_N\) under all variations of the intermediate
points through discrete admissible paths, with \(a_0\), \(a_N\), and \(z_0\)
fixed. \end{definition}

Note that the variations of the variables \(z_k\) are not independent; they are
induced by the discrete Herglotz recurrence.

\begin{theorem}
\label{thm:discrete-ELH}
Let \((a_0,\dots,a_N)\) be a discrete admissible path, and let
\((z_0,\dots,z_N)\) be determined by
\eqref{eq:discrete-herglotz-update}. Assume that
$\sigma_d(a_k,a_{k+1},z_k)\neq 0$ for $k=0,\dots,N-1$.
Then \((a_0,\dots,a_N)\) is a discrete Herglotz extremal if and only if, for
each \(k=1,\dots,N-1\), the covector
\[
\dd_1 L_d(a_k,a_{k+1},z_k)
+
\sigma_d(a_k,a_{k+1},z_k)
\dd_2 L_d(a_{k-1},a_k,z_{k-1})
\]
annihilates all admissible interior variations at \(a_k\). Here
\(\dd_1L_d\) and \(\dd_2L_d\) denote the components of \(\dd L_d\) with
respect to the first and second algebroid variables, respectively, understood
through a local extension when \(\mc A_d\subset E\times E\) is a proper
constraint submanifold.

Taking vertical parts, this condition gives
\[
\nu_1(\dd L_d(a_k,a_{k+1},z_k))
+
\sigma_d(a_k,a_{k+1},z_k)
\nu_2(\dd L_d(a_{k-1},a_k,z_{k-1})).
\]
Equivalently, in terms of the discrete contact Legendre maps, the corresponding
momentum matching equation is
\[
\FL_d^+(a_{k-1},a_k,z_{k-1})
=
\FL_d^-(a_k,a_{k+1},z_k),
\]
with the usual conormal interpretation when \(\mc A_d\subset E\times E\) is a
proper constraint submanifold. Thus, at the level of the discrete contact
Tulczyjew relation, discrete Herglotz extremals are represented by
concatenating elements of \(\mc R_{L_d}\).
\end{theorem}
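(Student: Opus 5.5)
We propose to argue exactly as in the continuous proof of the Herglotz variational characterization in Section~\ref{sec:lagrangian}, replacing the integrating factor \(\lambda(t)\) by a discrete product of the factors \(\sigma_d\).

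\textbf{Linearizing the recurrence.} Take a variation \((a_k+\delta a_k)\) through discrete admissible paths with \(\delta a_0=\delta a_N=0\) and \(\delta z_0=0\). Differentiating \eqref{eq:discrete-herglotz-update} gives
\[
\delta z_{k+1}=\sigma_d(a_k,a_{k+1},z_k)\,\delta z_k+\dd_1L_d(a_k,a_{k+1},z_k)\cdot\delta a_k+\dd_2L_d(a_k,a_{k+1},z_k)\cdot\delta a_{k+1}.
\]
Here \(\dd_1L_d,\dd_2L_d\) are computed through a local extension \(\widetilde L_d\). This is legitimate because \((\delta a_k,\delta a_{k+1})\) is tangent to \(\mc A_d\), so the result does not depend on the choice of extension.

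\textbf{Solving and reorganizing.} Write \(\sigma_k:=\sigma_d(a_k,a_{k+1},z_k)\) and \(\Pi_k:=\prod_{j=k}^{N-1}\sigma_j\), with \(\Pi_N=1\). Solving the linear recurrence with \(\delta z_0=0\) gives
\[
\delta z_N=\sum_{k=0}^{N-1}\Pi_{k+1}\bigl(\dd_1L_d(a_k,a_{k+1},z_k)\cdot\delta a_k+\dd_2L_d(a_k,a_{k+1},z_k)\cdot\delta a_{k+1}\bigr).
\]
Now group the terms by \(\delta a_k\), using \(\delta a_0=\delta a_N=0\). Since \(\Pi_k=\sigma_k\Pi_{k+1}\), this yields
\[
\delta z_N=\sum_{k=1}^{N-1}\Pi_{k+1}\Bigl(\dd_1L_d(a_k,a_{k+1},z_k)+\sigma_k\,\dd_2L_d(a_{k-1},a_k,z_{k-1})\Bigr)\cdot\delta a_k .
\]
Because every \(\sigma_j\neq0\), each coefficient \(\Pi_{k+1}\) is nonzero. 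This is the discrete analogue of \(\lambda(T)\neq0\).

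\textbf{Deducing both directions.} Sufficiency is immediate: if each bracketed covector annihilates admissible interior variations at \(a_k\), then \(\delta z_N=0\). For necessity, we need to isolate a single \(k\). The main obstacle is that a variation \(\delta a_k\) through admissible paths must keep both \((a_{k-1},a_k)\) and \((a_k,a_{k+1})\) in \(\mc A_d\). We therefore define the admissible interior variations at \(a_k\) to be those tangent vectors \(\delta a_k\) for which the variation with \(\delta a_j=0\) for all \(j\neq k\) stays tangent to \(\mc A_d\) at both adjacent steps. If \(\mc A_d\) is a proper constraint, we read the statement modulo \(N^*\mc A_d\), as announced. Applying stationarity to such a localized variation kills all terms except the \(k\)-th one, and dividing by \(\Pi_{k+1}\neq0\) gives the annihilation condition.

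\textbf{Momentum form.} For the vertical statement, restrict to vertical admissible \(\delta a_k\); this produces the \(\nu_1+\sigma_d\nu_2\) expression, understood as equal to zero. Dividing by \(-\sigma_k\) turns it into
\[
\nu_2\bigl(\dd L_d(a_{k-1},a_k,z_{k-1})\bigr)=-\sigma_k^{-1}\,\nu_1\bigl(\dd L_d(a_k,a_{k+1},z_k)\bigr),
\]
that is, \(\FL_d^+(a_{k-1},a_k,z_{k-1})=\FL_d^-(a_k,a_{k+1},z_k)\). In addition, \(z_k\) equals the value \(z'\) produced by the previous element of \(\mc R_{L_d}\). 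Together these two facts say that consecutive elements of \(\mc R_{L_d}\) concatenate.
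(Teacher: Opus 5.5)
Your computation follows the paper's proof step for step. You linearize the recurrence and solve it with the products $\Pi_{k+1}$ as a discrete integrating factor. Then you regroup by $\delta a_k$, divide by $\Pi_{k+1}\neq0$, take vertical parts and divide by $-\sigma_k$. For $\mc A_d=E\times E$ this is a complete proof.

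The gap is in ``Deducing both directions'' when $\mc A_d$ is a proper submanifold. The extension-independence you correctly note for the linearized recurrence is lost once you regroup by $\delta a_k$. With your definition, the admissible interior variations at $a_k$ form the space $W_k$ of localized variations, and necessity does hold on $W_k$. Sufficiency, however, is then not immediate. A general admissible variation only satisfies the coupled conditions $(\delta a_j,\delta a_{j+1})\in T\mc A_d$, so its components need not lie in $W_k$. The space $V$ of admissible interior variations is in general strictly larger than $\bigoplus_k W_k$.

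Here is a counterexample. Take $E=T\R$, $\mc A_d=\{x_1=x_0+hy_0\}$ with $h\neq0$, $L_d=\tfrac12 y_0^2$ (so $\sigma_d=1$) and $N=3$.
\begin{enumerate}
\item Localizing at $a_k$ forces $\delta x_k=0$ and then $h\,\delta y_k=0$. Hence $W_k=0$, and your criterion would make every admissible path extremal.
\item But $V$ is spanned by $\delta y_1=-\delta y_2=s$, $\delta x_2=hs$. So $\delta z_3=(y_1-y_2)s$, and extremality means $y_1=y_2$.
\item Reading the condition instead on the projections $V_k$ of $V$ fails the other way. That reading depends on the extension, and with $\widetilde L_d=\tfrac12y_0^2$ it forces $y_1=0$ on $V_1=\{(0,s)\}$.
\end{enumerate}
So no per-site annihilation condition yields the equivalence.

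What is missing is a Lagrange-multiplier form of the ``conormal interpretation'', with one multiplier per step. Write $V=\bigcap_jU_j$, where $U_j$ is the tangency condition at step $j$. Use $\bigl(\bigcap_jU_j\bigr)^\circ=\sum_jU_j^\circ$ and rescale by $\Pi_{j+1}\neq0$. Then $\delta z_N|_V=0$ if and only if there exist $\mu_j\in N^*_{(a_j,a_{j+1})}\mc A_d$, $j=0,\dots,N-1$, such that
\[
\bigl(\dd\widetilde L_d(a_k,a_{k+1},z_k)+\mu_k\bigr)_1+\sigma_k\bigl(\dd\widetilde L_d(a_{k-1},a_k,z_{k-1})+\mu_{k-1}\bigr)_2=0
\]
holds in $T^*_{a_k}E$ for every $k=1,\dots,N-1$. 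Here $(\cdot)_1,(\cdot)_2$ denote slot components, as for $\dd_1,\dd_2$. The same $\mu_j$ enters the equations at $k=j$ and at $k=j+1$. So the conditions cannot be taken modulo $N^*\mc A_d$ one $k$ at a time; treated independently at each $k$, they are vacuous in the example.

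In the example, set $\mu_j=\lambda_j(\dd x_1-\dd x_0-h\,\dd y_0)$. The system gives $\lambda_{k-1}=\lambda_k$ and $y_k=h\lambda_k$, hence $y_1=y_2$, as it should. Taking vertical parts then gives $\FL_d^+=\FL_d^-$ exactly, with both Legendre maps computed from $\dd\widetilde L_d+\mu_j$ at the corresponding step.

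The paper's own proof passes over this same point in a single sentence about variations ``compatible with the two adjacent constraints''. You located the obstacle correctly, but the localized definition does not remove it.
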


\begin{proof}
For brevity, write \(L_k=L_d(a_k,a_{k+1},z_k)\) and
\(\sigma_k=\sigma_d(a_k,a_{k+1},z_k)\). Varying the recurrence
\(z_{k+1}=z_k+L_k\) gives
\[
\delta z_{k+1}
=
\sigma_k\delta z_k
+
\left\langle \dd_1L_d(a_k,a_{k+1},z_k),\delta a_k\right\rangle
+
\left\langle \dd_2L_d(a_k,a_{k+1},z_k),\delta a_{k+1}\right\rangle .
\]
Here \(\delta a_k\) denotes an infinitesimal variation through discrete
admissible paths. Thus, when \(\mc A_d\) is a proper submanifold of
\(E\times E\), the adjacent variations satisfy
\[
(\delta a_{k-1},\delta a_k)\in T_{(a_{k-1},a_k)}\mc A_d,
\qquad
(\delta a_k,\delta a_{k+1})\in T_{(a_k,a_{k+1})}\mc A_d .
\]

Since \(z_0\) is fixed, \(\delta z_0=0\). Iterating the previous identity yields
\[
\delta z_N
=
\sum_{j=0}^{N-1}
\left(
\prod_{m=j+1}^{N-1}\sigma_m
\right)
\left[
\left\langle \dd_1L_d(a_j,a_{j+1},z_j),\delta a_j\right\rangle
+
\left\langle \dd_2L_d(a_j,a_{j+1},z_j),\delta a_{j+1}\right\rangle
\right],
\]
where an empty product is equal to \(1\).

The endpoints \(a_0\) and \(a_N\) are fixed. Hence the covector multiplying an
interior variation \(\delta a_k\), \(1\leq k\leq N-1\), is
\[
\left(
\prod_{m=k+1}^{N-1}\sigma_m
\right)
\dd_1L_d(a_k,a_{k+1},z_k)
+
\left(
\prod_{m=k}^{N-1}\sigma_m
\right)
\dd_2L_d(a_{k-1},a_k,z_{k-1}).
\]
Stationarity of \(z_N\) is therefore equivalent to the condition that this
covector annihilates all admissible interior variations compatible with the two
adjacent constraints \((a_{k-1},a_k)\in\mc A_d\) and
\((a_k,a_{k+1})\in\mc A_d\).

Since all \(\sigma_m\) are nonzero, the common factor
\(\prod_{m=k+1}^{N-1}\sigma_m\) can be removed. Thus the stationarity condition
is
\[
\dd_1L_d(a_k,a_{k+1},z_k)
+
\sigma_k
\dd_2L_d(a_{k-1},a_k,z_{k-1})
=0
\]
on admissible interior variations. If \(\mc A_d=E\times E\), this is an
ordinary covector equality at \(a_k\). If \(\mc A_d\) is a proper constraint
submanifold, the equality is understood modulo the conormal ambiguity
associated with the local extension of \(L_d\) off \(\mc A_d\).

Taking the vertical part of this condition gives
\[
\nu_1(\dd L_d(a_k,a_{k+1},z_k))
+
\sigma_k
\nu_2(\dd L_d(a_{k-1},a_k,z_{k-1}))
=0,
\]
again with the usual conormal interpretation in the constrained case.

By definition,
\[
\FL_d^+(a_{k-1},a_k,z_{k-1})
=
\nu_2(\dd L_d(a_{k-1},a_k,z_{k-1}))
\]
and
\[
\FL_d^-(a_k,a_{k+1},z_k)
=
-\sigma_k^{-1}
\nu_1(\dd L_d(a_k,a_{k+1},z_k)).
\]
Therefore the vertical condition is represented by the momentum matching
equation
\[
\FL_d^+(a_{k-1},a_k,z_{k-1})
=
\FL_d^-(a_k,a_{k+1},z_k),
\]
again modulo conormal terms in the constrained case. This is precisely the
condition that the outgoing momentum of the step
\((a_{k-1},a_k,z_{k-1})\) coincides with the incoming momentum of the step
\((a_k,a_{k+1},z_k)\). Equivalently, consecutive elements of
\(\mc R_{L_d}\) are concatenated.
\end{proof}

\begin{remark}
The role of \(\mc A_d\) and of the conormal ambiguity is analogous to the
constrained variational construction in \cite{deLeonJimenezMartinDeDiego2012}.
There, a Lagrangian defined on a constraint submanifold generates a Lagrangian
submanifold by extending the differential modulo the conormal bundle, and the
resulting dynamics is naturally an implicit relation. The present theorem is
the contact-discrete counterpart of this mechanism. The submanifold
\(\mc A_d\subset E\times E\) replaces the continuous constraint submanifold,
the factor \(\sigma_d=1+\partial_zL_d\) accounts for the Herglotz contact
recursion, and the matching of the vertical parts of the adjacent differentials
gives the momentum matching condition. Thus the construction keeps the same
Tulczyjew--conormal philosophy, but replaces symplectic constrained
variational calculus by a contact relation generated by a discrete Herglotz
object.
\end{remark}

The preceding theorem is the local form of the contact groupoid matching
condition. If a groupoid-level Legendrian relation
\(\Sigma^{\mathrm c}\subset\mathcal G^{\mathrm c}\) exists and is represented
locally by \(L_d\), then the source and target maps of the contact groupoid are
represented by the two discrete contact Legendre maps. Thus
\(\beta^{\mathrm c}(\eta_k)=\alpha^{\mathrm c}(\eta_{k+1})\) becomes, in the
chosen trivialization,
\[
\FL_d^+(a_{k-1},a_k,z_{k-1})
=
\FL_d^-(a_k,a_{k+1},z_k),
\]
together with the update \(z_{k+1}=z_k+L_d(a_k,a_{k+1},z_k)\).

Equivalently, by grouping the variables as incoming and outgoing contact phase
points,
\[
(p^-,p^+,z,z')
\longleftrightarrow
\bigl((p^-,z),(p^+,z')\bigr),
\]
we may regard \(\mc R_{L_d}\) as a subset of $(E^*\times\R)\times(E^*\times\R)$. This is the viewpoint used below when discussing regularity.

Before discussing regularity, it is useful to regard the discrete Legendre maps
as maps to the contact phase space. Define the augmented discrete contact
Legendre maps by
\[
\mathbb F_d^-L(a_0,a_1,z)
:=
\bigl(\FL_d^-(a_0,a_1,z),z\bigr)
\in E^*\times\R,
\]
and
\[
\mathbb F_d^+L(a_0,a_1,z)
:=
\bigl(\FL_d^+(a_0,a_1,z),z+L_d(a_0,a_1,z)\bigr)
\in E^*\times\R.
\]
Thus \(\mathbb F_d^-L\) assigns to a discrete step its incoming contact phase
point, while \(\mathbb F_d^+L\) assigns its outgoing contact phase point. With
this notation, the discrete relation is the image
\[
\mc R_{L_d}
=
\bigl(\mathbb F_d^-L,\mathbb F_d^+L\bigr)(\mc A_d\times\R)
\subset
(E^*\times\R)\times(E^*\times\R).
\]

\begin{proposition}
\label{prop:discrete-regular-implicit}
Assume that the augmented discrete contact Legendre map
\(\mathbb F_d^-L\) is a local diffeomorphism onto its image. Then the discrete
relation \(\mc R_{L_d}\) is locally the graph of the map
\[
\Phi_{L_d}
=
\mathbb F_d^+L\circ
\left(\mathbb F_d^-L\right)^{-1}.
\]
Equivalently, the discrete dynamics is determined locally by the matching
condition
\[
\FL_d^+(a_{k-1},a_k,z_{k-1})
=
\FL_d^-(a_k,a_{k+1},z_k),
\qquad
z_k=z_{k-1}+L_d(a_{k-1},a_k,z_{k-1}).
\]

\end{proposition}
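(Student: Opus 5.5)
The argument mirrors the graph part of Theorem~\ref{thm:regular-legendrian-relation}, with the groupoid \(\Sigma^{\mathrm c}\) replaced by the parameter space \(\mc A_d\times\R\) and the source and target maps replaced by the augmented Legendre maps \(\mathbb F_d^-L\) and \(\mathbb F_d^+L\). Fix a point \(w_0=(a_0,a_1,z)\in\mc A_d\times\R\). By hypothesis there are neighborhoods \(U\) of \(w_0\) and \(V\) of \(\mathbb F_d^-L(w_0)\) in its image such that \(\mathbb F_d^-L|_U:U\to V\) is a diffeomorphism. Each incoming contact phase point \((p^-,z)\in V\) therefore has a unique preimage \(w\in U\), and the associated outgoing point \(\mathbb F_d^+L(w)\) is uniquely determined. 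Since \(\mc R_{L_d}=(\mathbb F_d^-L,\mathbb F_d^+L)(\mc A_d\times\R)\), I expect
\[
\mc R_{L_d}\cap\bigl(V\times\mathbb F_d^+L(U)\bigr)\supseteq\{((p^-,z),\Phi_{L_d}(p^-,z)) : (p^-,z)\in V\},
\]
with equality once we restrict to the local branch of the relation parametrized by \(U\). This is the same restriction that was used in Theorem~\ref{thm:regular-legendrian-relation}, and I would state it explicitly as the meaning of ``locally''. Smoothness of \(\Phi_{L_d}=\mathbb F_d^+L\circ(\mathbb F_d^-L|_U)^{-1}\) then follows because it is a composition of smooth maps.

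For the second statement, I would read off the components of \(\Phi_{L_d}\). Since the \(z\)-component of \(\mathbb F_d^-L\) is the identity, inverting it simply recovers \(z\) together with the admissible step \((a_0,a_1)\). The image then has momentum \(\FL_d^+(a_0,a_1,z)\) and contact value \(z+L_d(a_0,a_1,z)\), which is exactly the Herglotz update \eqref{eq:discrete-herglotz-update}.

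Iterating, a sequence of contact phase points \((p_k,z_k)\) satisfies \((p_{k+1},z_{k+1})=\Phi_{L_d}(p_k,z_k)\) precisely when the steps \((a_{k-1},a_k,z_{k-1})\) and \((a_k,a_{k+1},z_k)\) that realize consecutive elements of \(\mc R_{L_d}\) satisfy two conditions. First, the outgoing momentum of one step equals the incoming momentum of the next, \(\FL_d^+(a_{k-1},a_k,z_{k-1})=\FL_d^-(a_k,a_{k+1},z_k)\). Second, \(z_k=z_{k-1}+L_d(a_{k-1},a_k,z_{k-1})\). This is the concatenation statement established in Theorem~\ref{thm:discrete-ELH}. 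Local invertibility of \(\mathbb F_d^-L\) guarantees that, given \((a_{k-1},a_k,z_{k-1})\), the next step \((a_k,a_{k+1},z_k)\) is determined uniquely near the reference point. Hence the matching condition determines the dynamics locally.

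The main obstacle I foresee is bookkeeping rather than analysis. First, \(\mathbb F_d^-L\) is defined on \(\mc A_d\times\R\), where the momenta are only defined modulo conormal terms, so a fixed local extension of \(L_d\) must be chosen before \(\mathbb F_d^-L\) becomes an honest smooth map. Second, in the iteration the step recovered by \((\mathbb F_d^-L)^{-1}\) must share its first entry \(a_k\) with the previous step. I would handle this by noting that the inverse produces some admissible step whose first component is consistent with the matched momentum. If this consistency is not automatic, I would build it into the hypothesis by reading the local diffeomorphism as parametrizing the steps with fixed \(a_k\).
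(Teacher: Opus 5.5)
Your proposal is essentially the paper's proof. Local inversion of $\mathbb F_d^-L$ exhibits $\mc R_{L_d}$ as the graph of $\Phi_{L_d}$, and your branch-wise reading of ``locally'' is the right one. The equivalence with momentum matching is then just the componentwise comparison of the outgoing point $\mathbb F_d^+L(a_{k-1},a_k,z_{k-1})=(\FL_d^+(a_{k-1},a_k,z_{k-1}),z_k)$ with the incoming point $\mathbb F_d^-L(a_k,a_{k+1},z_k)=(\FL_d^-(a_k,a_{k+1},z_k),z_k)$, for steps that already share $a_k$.

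You should simply drop, rather than build into the hypothesis, the claim that local invertibility produces a next step with the same first entry $a_k$. The inverse only forces the base points to agree, and neither the statement nor the paper's argument needs that claim. The appeal to Theorem~\ref{thm:discrete-ELH} is also unnecessary, since the identification follows directly from the definitions of $\mathbb F_d^\pm L$.
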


\begin{proof}
By definition,
\[
\mc R_{L_d}
=
\left\{
\left(
\mathbb F_d^-L(a_0,a_1,z),
\mathbb F_d^+L(a_0,a_1,z)
\right)
:
(a_0,a_1,z)\in \mc A_d\times\R
\right\}.
\]
If \(\mathbb F_d^-L\) is locally invertible onto its image, then, after
restricting to a sufficiently small neighborhood, each incoming phase point
\((p^-,z)\) in that image determines a unique discrete step
\[
(a_0,a_1,z)
=
(\mathbb F_d^-L)^{-1}(p^-,z).
\]
The outgoing phase point is therefore uniquely determined by
\[
\Phi_{L_d}(p^-,z)
=
\mathbb F_d^+L
\left(
(\mathbb F_d^-L)^{-1}(p^-,z)
\right).
\]
Consequently,
\[
\mc R_{L_d}
=
\left\{
\left(
(p^-,z),
\Phi_{L_d}(p^-,z)
\right)
:
(p^-,z)\in \operatorname{Im}(\mathbb F_d^-L)
\right\}
\]
locally. Hence \(\mc R_{L_d}\) is locally the graph of \(\Phi_{L_d}\).
This is the usual regularity mechanism behind discrete Legendre
transformations in variational integrators and groupoid discrete mechanics;
see, for instance, \cite{MarsdenWest2001,MarreroMartinDeDiegoStern2015}.

For a sequence of steps, the outgoing phase point of the step
\((a_{k-1},a_k,z_{k-1})\) is
\[
\mathbb F_d^+L(a_{k-1},a_k,z_{k-1})
=
\left(
\FL_d^+(a_{k-1},a_k,z_{k-1}),
z_k
\right),
\]
where
\[
z_k=z_{k-1}+L_d(a_{k-1},a_k,z_{k-1}).
\]
The incoming phase point of the next step \((a_k,a_{k+1},z_k)\) is
\[
\mathbb F_d^-L(a_k,a_{k+1},z_k)
=
\left(
\FL_d^-(a_k,a_{k+1},z_k),
z_k
\right).
\]
Therefore the equality of consecutive phase points is exactly the momentum
matching condition
\[
\FL_d^+(a_{k-1},a_k,z_{k-1})
=
\FL_d^-(a_k,a_{k+1},z_k),
\]
together with the discrete Herglotz update for \(z_k\).
\end{proof}

\begin{remark}
The singular case should be understood in the sense of implicit discrete
dynamics. If the augmented Legendre maps are not locally invertible, the
matching condition does not determine a discrete evolution map on
\(E^*\times\R\). Nevertheless, the subset $\mc R_{L_d}\subset (E^*\times\R)\times(E^*\times\R)$ still defines an implicit difference relation: a discrete phase sequence
\((p_k,z_k)\) is a solution whenever
\[
\bigl((p_{k-1},z_{k-1}),(p_k,z_k)\bigr)\in \mc R_{L_d}
\]
for all admissible consecutive indices. Thus the primary object is not a
map, but a relation whose points may or may not be extendable to full
forward, backward, or two-sided trajectories.

This is the same geometric viewpoint used for implicit discrete dynamics on
Lie groupoids in \cite{IglesiasMarreroMartindeDiegoPadron2013}. There, an
implicit difference equation is represented by a submanifold of a Lie groupoid,
and one may have to pass to its forward, backward, or two-sided integrable
part in order to obtain discrete trajectories. 

In the present setting
\(\mc R_{L_d}\) plays the analogous role on the contact phase space
\(E^*\times\R\). Regularity of the augmented Legendre maps is precisely the
condition which turns this implicit relation into a local graph, while failure
of regularity leaves a well-defined contact Tulczyjew relation but not, in
general, a discrete evolution map. This is the discrete analogue of the
continuous Tulczyjew picture, where singular Lagrangians define implicit
differential relations rather than vector fields.
\end{remark}

Our construction recovers standard discrete contact mechanics
\cite{SimoesMartinDeDiegoLainzDeLeon2021} when \(E=TQ\) and
\(\mc A_d=Q\times Q\). In that case the formulas above reduce to the discrete
Herglotz equations and to the left and right discrete contact Legendre
transformations of contact variational integrators. Under the usual regularity
assumptions, the induced discrete evolution is the conformal contactomorphism of
the standard theory. The present construction does not aim to reprove this
property or to reproduce the full theory of contact variational integrators,
which also includes exact discrete Lagrangians, contact exponential maps, and
backward error analysis. Its purpose is instead to isolate the Tulczyjew
relation behind the discrete Herglotz equations and to extend the
variational-contact mechanism from \(Q\times Q\) to discrete admissibility
relations on skew algebroids.

Thus, once a discrete admissibility relation has been fixed, the construction
above gives a closed local discrete contact theory. Its output is a discrete relation generated by a discrete contact object. In the regular tangent-bundle case this relation induces the standard conformal contact evolution; in the singular or skew algebroid case it remains meaningful even when no phase-space map is available. A fully global theory should replace the chosen local admissibility relation by a line-bundle, contact-groupoid, or homogeneous-symplectic groupoid construction.

This point is clarified by the contact groupoid viewpoint. In the symplectic
case, regular Lagrangian submanifolds of symplectic groupoids are local
Lagrangian bisections and generate Poisson transformations on the base. In the
contact case, the corresponding objects are Legendrian bisections of contact
groupoids, and the induced transformations are contact, or conformal contact,
maps in a local trivialization. Before imposing regularity, however, the natural
object is not a map but a Legendrian relation. The local relation
\(\mc R_{L_d}\) is the discrete algebroid representative of this relation.

\begin{remark}
Poissonization-based Jacobi Hamiltonian integrators are designed to produce
Jacobi-preserving time-step maps for Hamiltonian systems on Jacobi manifolds
\cite{AraujoOliveiraMestre2026}. The construction in this paper has a
different purpose. It starts from a discrete contact generating object and
produces a discrete contact Tulczyjew relation. When the Legendre data are
regular, this relation may induce a conformal contact map, as in the
tangent-bundle case. Without regularity, however, the relation remains defined
even when no discrete phase-space map exists. Thus the present construction is
complementary to Jacobi-preserving integrators: it is a variational--Tulczyjew
formulation of discrete contact dynamics, rather than a Poissonization-based
Hamiltonian integrator.
\end{remark}

When the skew algebroid is integrable, one expects a fully global version of
the construction. If \(G\rightrightarrows M\) integrates \(E\), the relation
\(\mc A_d\) should be replaced by the appropriate constraint or composability
relation in \(G\), and \(L_d\) should be interpreted as a contact generating
object on a contact or homogeneous-symplectic enhancement of the groupoid. The
relation \(\mc R_{L_d}\) should then arise as the local algebroid model of a
groupoid-level Legendrian relation.

Thus, once a discrete admissibility relation has been fixed, the construction
above gives a closed discrete contact theory in the local trivialization. Its
output is not merely a numerical update rule, but a discrete contact Tulczyjew
relation generated by a discrete contact object. A fully global discrete theory
should replace the trivial contactization by a line-bundle, contact-groupoid, or
homogeneous-symplectic groupoid framework, but the central mechanism is already
visible in the local construction.

\section{Conclusions and future work}\label{conc}

We have formulated a contact Tulczyjew framework for dissipative dynamics on skew
algebroids. The construction starts from the ordinary algebroid Tulczyjew morphism
and extends it in line-bundle form by adding the Euler contribution on \(E^*\). In a
local trivialization this produces exactly the contact term appearing in the local
morphism of Section~2. Thus the scalar contact variables used in coordinates are
understood as local representatives of line-bundle data.

For a contact generating object, the resulting Tulczyjew relation gives an implicit
first-order dynamics on the contact phase side. In local coordinates, the matching
condition recovers the Euler-Lagrange-Herglotz equations on the skew algebroid.
Using the admissible variations determined by the algebroid structure, these
equations are shown to be equivalent to the Herglotz variational principle. The
regular case gives a local vector field, the hyperregular case gives the
corresponding contact Hamiltonian formulation, and the singular case remains
naturally described as an implicit relation.

We have also developed a discrete contact Tulczyjew theory. The construction
starts from the contact-groupoid idea that the natural object is a Legendrian
relation, and then gives its local skew algebroid representative after a
discrete admissibility relation \(\mc A_d\subset E\times E\) has been fixed. A
discrete contact generating object determines a relation
\(\mc R_{L_d}\subset (E^*\times\R)\times(E^*\times\R)\), and the discrete
Herglotz equations are obtained by matching consecutive contact momenta, with
the usual conormal interpretation when \(\mc A_d\) is a proper constraint
submanifold. In the regular tangent-bundle case this recovers the standard
contact variational integrators, while in the singular case the same relation
remains meaningful even when no discrete phase-space map exists. Thus the
discrete theory retains the central Tulczyjew principle: dynamics is first a
geometric relation, and only under regularity assumptions does it become an
evolution map.

Several directions remain open.

\begin{enumerate}
\item \textbf{Singular dissipative systems and constraint algorithms.}
Since singular systems are naturally included in the implicit Tulczyjew picture, it is natural to develop an integrability algorithm for contact Tulczyjew relations on skew algebroids. At the level of implicit differential equations, the general geometric mechanism is to project the relation to the phase space, impose tangency, and iterate until a final integrable relation is obtained, as in the algorithm of Mendella, Marmo and Tulczyjew~\cite{MendellaMarmoTulczyjew1995}. The conservative Lie-algebroid counterpart is the constraint algorithm for presymplectic Lie algebroids and singular Lagrangian systems developed in \cite{IglesiasMarreroMartindeDiegoSosa2008}, while the contact side is modeled by the precontact approach to singular Lagrangians in \cite{deLeonLainz2019}. A contact-algebroid version should combine these ingredients: the final constraint object would no longer be a submanifold carrying a presymplectic algebroid structure, but an implicit contact Tulczyjew relation compatible with the Herglotz evolution and the algebroid admissibility conditions. Such an algorithm should clarify the role of primary and secondary constraint relations, the possible loss of uniqueness, and the passage from an implicit contact relation to a dynamics when regularity is recovered.

\item \textbf{Covariant extension to classical field theory.}
A covariant Herglotz-type theory should replace the scalar action variable by
an action current, or more intrinsically by a density-valued object. From the
present viewpoint, the goal would be to construct a covariant contact
Tulczyjew framework over first jet bundles or suitable multiphase bundles, in
which dissipative field equations arise as local expressions of intrinsic
contact generating objects. This should be compared with the
premultisymplectic Tulczyjew formulation of first-order classical field
theories, where the Euler--Lagrange and Hamilton--De Donder--Weyl equations
are described by Lagrangian submanifolds
\cite{Grabowska2012, GrabowskaGrabowski2013, CamposGuzmanMarrero2012}. The time-dependent case
\cite{GuzmanMarrero2010} provides a useful intermediate model between
mechanics and the fully covariant field-theoretic setting.

\item \textbf{Global contact groupoid formulation of the discrete theory.}
The local theory developed here is self-contained once a discrete admissibility
relation is fixed. A separate global problem is to identify geometric situations
in which this local relation is induced by a contact groupoid. So, a natural next step is to replace this
local input by a contact-groupoid or homogeneous-symplectic groupoid
construction. In the integrable Lie algebroid case, the admissibility relation
should arise from a constraint or composability relation on an integrating
groupoid, and the discrete contact Tulczyjew relation should be the local
representative of a groupoid-level Legendrian relation. This would clarify the
role of Legendrian bisections, regularity, and conformal contact transformations
in the global discrete theory.

\item \textbf{Contact optimal control on skew algebroids.}
The present framework also suggests a contact analogue of the Pontryagin
maximum principle on skew algebroids. The conservative algebroid setting is
provided by the Pontryagin maximum principle on skew algebroids
\cite{GrabowskiJozwikowski2011}, while the dissipative/contact side is modeled
by the contact Pontryagin maximum principle and the Herglotz optimal control
problem of \cite{deLeonLainzMunozLecanda2023}. From the present viewpoint, the
goal would be to formulate the necessary optimality conditions as a contact
Tulczyjew relation on \(E^*\times\R\), or intrinsically on the corresponding
line-bundle phase space. Such a theory would connect dissipative variational
dynamics, algebroid reduction, singular optimal control, and contact geometry
within a single relation-based framework.

\item \textbf{Hamilton-Jacobi theory and contact generating functions.}
Another natural direction is to develop a Hamilton--Jacobi theory for contact
Tulczyjew dynamics on skew algebroids. In the symplectic setting, the
Hamilton--Jacobi equation can be understood through generating functions and
symplectic relations, as in the classical construction described in \cite{AbrahamMarsden1978} and in the relation-based approach to generating functions.
This viewpoint is closely related to the geometric theory of the
Hamilton-Jacobi equation and generating functions developed in 
\cite{FerraroDeLeonMarreroMartinVaquero2017}. From the perspective of
the present paper, the corresponding contact problem would be to characterize
those sections, or more generally those line-bundle generating objects, whose
associated contact Tulczyjew relation is invariant under the dissipative
dynamics. Such a theory should clarify the role of Hamilton-Jacobi equations
for Herglotz systems, singular contact dynamics, and discrete contact
relations on skew algebroids.
\end{enumerate}

\end{document}